\newtheorem{definition}{Definition}[section]
\newtheorem{rema}[definition]{Remark}
\newtheorem{exa}[definition]{Example}
\newtheorem{obs}[definition]{Observation}
\newtheorem{assu}[definition]{Assumption}
\newtheorem{lemma}[definition]{Lemma}
\newtheorem{proposition}[definition]{Proposition}
\newtheorem{theorem}[definition]{Theorem}
\newtheorem{corollary}[definition]{Corollary}
\newenvironment{remark}%
{\begin{rema}\rm}%
{\end{rema} }
\newenvironment{assumption}%
{\begin{assu}\rm}%
{\end{assu} }
\newenvironment{example}%
{\begin{exa}\rm}%
{\end{exa} }
\newenvironment{proof}{\noindent{\bf Proof.}}{\hfill$\Box$}
\newcommand{\bfr}{{\boldsymbol{r}}}
\newcommand{\bfw}{{\boldsymbol{w}}}
\newcommand{\bfz}{{\boldsymbol{z}}}
\newcommand{\bfF}{{\boldsymbol{F}}}
\newcommand{\bfU}{{\boldsymbol{U}}}
\newcommand{\bfV}{{\boldsymbol{V}}}
\def\I{{\mathcal I}}
\def\C{{\mathbb C}}
\def\N{{\mathbb N}}
\def\Q{{\mathbb Q}}
\def\Z{{\mathbb Z}}
\newcommand{\ff}{{\bf f}}
\newcommand{\g}{{\bf g}}
\newcommand{\vv}{{\bf v}}
\newcommand{\sD}{\mathcal D}
\newcommand{\dnull}{\rm dnull}
\newcommand{\Null}{\rm null}
\begin{document}

\title{Certifying  solutions to overdetermined and singular polynomial systems  over $\Q$}

\author[NCSU]{Tulay Ayyildiz Akoglu \fnref{fnA}}
\author[ND]{Jonathan D. Hauenstein \fnref{fnH}}
\author[NCSU]{Agnes Szanto \fnref{fnS}}

\fntext[fnA]{This research was partly supported by a Turkish grant.}
\fntext[fnH]{This research was partly supported by DARPA YFA and NSF grant DMS-1262428.}
\fntext[fnS]{This research was partly supported by NSF grant CCF-1217557.}

\address[NCSU]{Department of Mathematics, North Carolina State University, Campus Box 8205, Raleigh, NC, 27965, USA.}
\address[ND]{Department of Applied and Computational Mathematics and Statistics, University of Notre Dame, Notre Dame, IN, 46556, USA.}

 \pagestyle{myheadings}
\markright{T. Akoglu, J.D. Hauenstein, A. Szanto }

\begin{abstract} This paper is concerned with certifying that a given point is near an exact root of an overdetermined or singular polynomial system with rational coefficients. The difficulty lies in the fact that consistency of overdetermined systems is not a continuous property. Our certification is based on hybrid symbolic-numeric methods to compute the exact {\em rational univariate representation} (RUR) of a component of the input system from approximate roots.  For overdetermined polynomial systems with simple roots, we compute an initial RUR from approximate roots.
The accuracy of the RUR is increased via Newton iterations 
until the exact RUR is found, which we certify using exact arithmetic.   Since the RUR is well-constrained, 
we can use it to certify the given approximate roots 
using $\alpha$-theory.
To certify isolated singular roots, we use a determinantal form of the
{\em  isosingular deflation}, 
which adds new polynomials to the original system
without introducing new variables. The resulting polynomial system is overdetermined, but the roots are now simple, thereby
reducing the problem to the overdetermined case. 
We prove that
our algorithms have complexity that are  polynomial in the input plus the output size upon successful convergence, and we use worst case upper bounds for termination when our iteration does not converge to an exact RUR. Examples are included to demonstrate the approach.\end{abstract}

\date{\today}

\maketitle

\section{Introduction}

In  their recent article \cite{Hauenstein-Sotille}, F. Sottile 
and the second author showed that one can get an efficient and practical root certification algorithm using  
\hbox{$\alpha$-theory} (c.f. \cite{Smale86,BlCuShSm}) for  well-constrained polynomials systems. The same paper also considers overdetermined systems over the rationals and show how to use 
$\alpha$-theory to certify that a given point is {\em not} an approximation of any exact roots of the system. However, to certify that a point is near an exact root, 
one can use universal lower bounds for the minimum of positive polynomials on a disk  That paper concludes that all known bounds
were ``too small to be practical.''

A closer look at the literature on lower bounds for the minimum of  positive polynomials over the roots of zero-dimensional rational polynomial systems reveals  that they all reduce the problem to the univariate case and use univariate root separation bounds (see, for example, \cite{Canny1990, Jeronimo-Perrucci,Bornawell-Yap,JePeTs2013}). This led to the idea of directly using an exact
univariate representation for certification
of the input system instead of using universal lower bounds 
that are often very pessimistic. 
For example, the~overdetermined~system 
$$f_1:=x_1-\frac{1}{2}, f_2:=x_2-x_1^2, \ldots, f_n:=x_n-x_{n-1}^2, f_{n+1}:=x_n
$$
has no common roots, but the value of $f_{n+1}$ on the common root of $f_1, \ldots, f_n$ is double exponentially small in $n$. While universal lower bounds cover these artificial cases, 
our approach, as we shall see, has the ability to terminate early in cases when the witness for our input instance is small. 

In principle, one can compute such a univariate representation using purely algebraic techniques, for example, by solving large linear systems corresponding to resultant or subresultant matrices (see, for example, \cite{Szanto2008}). However, this purely symbolic method would again lead to worst case complexity bounds. Instead, we propose a hybrid symbolic-numeric approach, using the approximate roots of the system, as well as exact univariate polynomial remaindering over $\Q$.  We expect that our method will make the certification of roots of overdetermined systems practical for cases when the universal lower bounds are too pessimistic, or when the actual size of our univariate representation is significantly smaller than in the worst case.

Consider an overdetermined system $\ff=(f_1, \ldots, f_m)\in \Q[x_1, \ldots, x_n]$ for some $m>n$, and assume that  the ideal  $\I:=\langle f_1, \ldots, f_m\rangle$ is radical and zero dimensional.  With these assumptions Roullier's Rational Univariate Representation (RUR) for  $\I$ exists  \cite{Rou99}, as well as for  any component of $\I$ over the rationals. Since the polynomials in the RUR have also rational coefficients, we can hope to compute them exactly, unlike the possibly irrational coordinates of the common roots of $\I$.  
With the exact RUR, which is a well-constrained system of polynomials, we can use $\alpha$-theory as in \cite{Hauenstein-Sotille} to certify that a given point is an approximate root for the RUR, and thus for our original system $\ff$. Alternatively, from an RUR, we can compute Hermite matrices for our system which can be used 
to certify that there is an exact root within $\epsilon$ 
of our given point.

Our numerical method to compute an RUR for $\I$ or for a rational component of $\I$ consists of three steps: 
\begin{enumerate}
\item compute approximations of all isolated roots 
to a given accuracy of a random well-constrained (square) set of linear combinations of the polynomials $f_1, \ldots, f_m$ using homotopy continuation (e.g., see \cite{BHSW13,SW05});
\item among the roots computed in Step 1, choose a subset that is a candidate to be approximations to roots of $\I$ or a rational component of $\I$ -- for this step, we can only give heuristics on how to proceed;
\item construct a rational  RUR from the approximate roots chosen in Step 2 using Lagrange interpolation and rational number reconstruction, and check whether the polynomials in $\ff$ reduce to zero modulo the computed RUR.  If yes, terminate, if not, continue to compute iteratively more accurate approximations of the RUR. 
\end{enumerate}

We propose two methods to increase the accuracy of the computed RUR. The first one, called {\em Local Newton Iteration}, simply repeats Step~3 with ever more accurate root approximations.  The second one,
called {\em Global Newton Iteration}, is a version of Hensel's lifting and was used in the non-Archimedian metric in, for example,
\cite{Kaltofen85b,Trinks85,Winkler88,Grobner-free}.  We apply it 
in the usual Euclidean metric on coefficient vectors with rational entries.   In~\cite{HausPanSza2014}, it was shown that the two methods are equivalent when using the p-adic non-Archimedian metric but different under the Euclidean metric.  
Furthermore,~\cite{HausPanSza2014}
gives sequential and parallel complexity analysis
for~the~iterations~of~both~of~these~methods.

Note that termination of the above steps depends
on the choice of approximate roots in Step 2.  
In particular, for wrong choices, the 
iterated RUR will never converge to an exact RUR of a rational component of $\I$. We will exhibit several 
approaches for termination: either by an {\em a priori} 
bound on the height of the coefficients of the exact RUR, 
or by incorporating certification of non-roots as 
in \cite{Hauenstein-Sotille}. 

Another approach that eliminates the heuristic choice in Step 2 
is to compute an exact RUR for all roots computed in Step 1.
From this RUR, using symbolic techniques, one can 
compute an exact RUR for $\I$. This method will always converge to an exact RUR, provided that the roots computed in Step 1 are certified approximate roots of the well-constrained system and the round off error is negligible.  Even though this technique is less efficient than the one described above, one may choose it if the method summarized above fails.  Note that this hybrid symbolic-numeric method  is highly parallelizable, which may make it preferable compared to purely symbolic methods to compute an RUR for $\I$ via elimination. 

In the second part of the paper, we consider certifying isolated singular roots of a rational polynomial system.  Due the behavior of Newton's method near singular roots
(e.g., see \cite{Griewank-Osborne}),
standard techniques in $\alpha$-theory can not be applied to certify such roots even if the polynomial system is well-constrained.
The key tool to handle such multiple roots is called {\em deflation}. 
Deflation techniques ``regularize'' the system thereby creating
a new polynomial system which has a simple root corresponding to the
multiple root of the original system \cite{DZ05,Hauenstein-Wampler,LVZ06,LVZ08,O87,OWM83}.  In this work,
we will focus on using a determinantal form of the
{\em isosingular deflation} \cite{Hauenstein-Wampler}, in 
which one simply adds new polynomials to the original system
without introducing new variables.  The new polynomials are constructed based on exact information that one can obtain 
from a numerical approximation of the multiple root.  
In particular, if the original system
had rational coefficients, the new polynomials which
remove the multiplicity information also have rational coefficients.
Thus, this technique has reduced us to the case of 
an overdetermined system over $\Q$ in the original set of variables that has a simple~root.

\subsection{Related work}

One of the applications of certifying near-exact  solutions of overdetermined and singular systems is the certification of computerized proofs of mathematical theorems, similarly as the 
software package {\tt alphaCertified} of \cite{Hauenstein-Sotille} was applied to confirm a conjecture of Littlewood in \cite{BozLeeRony2013}. 

As we mentioned in the Introduction, for well-constrained (square) polynomial systems, the paper  \cite{Hauenstein-Sotille} applies $\alpha$-theory (c.f. \cite{Smale86,BlCuShSm}) to obtain an efficient and practical root certifying  algorithm.  It is proposed to use universal lower bounds to certify approximate roots of overdetermined systems. In~\cite{DedieuShub2000}, an $\alpha$-theory for overdetermined systems was developed.  However, 
since this approach cannot distinguish between local minimums and roots, we can not use it to certify roots of overdetermined systems. There is an extended body of literature on using interval arithmetic and optimization methods to certify the existence or non-existence of the solutions of well-constrained systems with guaranteed accuracy, e.g.,
\cite{Krawczyk1969,Moore1977,Rump1983,YuKaTo1998,KanKashOish99,Naketal2003}.
Techniques to certify each step of path tracking in
homotopy continuation for well-constrained systems using 
$\alpha$-theory are presented in  \cite{BeltLey2012,BeltLey2013,HaHaLi2014}.  
Recently, a certification method of real roots of positive dimensional systems was studied in \cite{Yangetal2013}.
 
Related to the certification problems under consideration
is the problem of finding certified sum of squares (SOS) decompositions of rational polynomials. In \cite{PeyrlPar07,PeyrlParr2008,Kaltofenetal2008,Kaltofenetal2012},
they turn SOS decompositions given with approximate (floating point) coefficients into rational ones, assuming that the feasible domain of the corresponding semidefinite feasibility problem has nonempty interior. In \cite{MonnCor2011}, they adapt these techniques to the degenerate case, however they also require  a feasible solution with rational coefficients exists. The certification of more general polynomial, semi-algebraic and transcendental optimization problems were considered in \cite{AllamigeonGMW14}. In \cite{ElDinZhi2010}, they compute rational points in semi-algebraic sets and give a method to decide if a polynomial can be expressed as an SOS of rational polynomials.  Note that we can straightforwardly translate the certification of approximate roots of overdetermined polynomial systems into polynomial optimization problems over  compact convex sets (using a ball around the approximate 
root), however, we cannot guarantee a rational feasible solution. Instead, we propose to construct the rational representation of several irrational roots that form a rational component of the input system. Note also that the  coefficient vector  of the  RUR   of the input ideal  is the solution of a linear system  corresponding to multiples of the input polynomials that appears as the affine constraints in Lasserre's relaxation in
\cite{LassLauRos2007}, and thus can be computed using purely symbolic methods. However,   here we propose a more efficient symbolic-numeric approach,  constructing an RUR from approximate roots, and allowing RUR's of smaller rational components as well.  

The idea of computing an exact solution from numerical approximations is not new.  
In \cite{CaPaHaMo2001}, they give a randomized  algorithm that for a given approximate zero  $z$ corresponding to an exact zero $\xi$ of a polynomial system $F_1, \ldots, F_n\in \Q[x_1, \ldots, x_n]$ finds the RUR of the irreducible component $V_\xi\subset V(F_1, \ldots, F_n)$ in polynomial time depending on $n$, the heights of $F_i$, the degree of the minimal polynomial of a primitive element for $V_\xi$, and the height of this minimal polynomial. The algorithm in \cite{CaPaHaMo2001} uses the algorithm in \cite{KanLenLov1988} for the construction of the minimal polynomial of a given approximate algebraic number. However, the algorithm in \cite{KanLenLov1988} requires an upper bound for the height of the algebraic number, and this bound is used in the construction of the lattice that they apply LLL lattice basis reduction. To get such bound {\em a priori}, we would need to use universal bounds for the height.  In order to get  an incremental algorithm with early termination for the case when the output size is small,   one can modify the algorithm in \cite{KanLenLov1988} to be  incremental, but that would require multiple application of the lattice  basis reduction algorithm.  Alternatively,  one can apply the PSLQ algorithm as in \cite {FerBaiArn1999}, which is incremental and does not require an 
{\em a priori} height bound.   The main point of the approach in this paper is that we assume to know {\em all} approximate roots of a rational  component, so in this case we can compute the exact RUR much more efficiently, and in parallel, as we prove in this paper and in \cite{HausPanSza2014}.  So instead of multiple LLL lattice basis reduction, we propose a cheaper lifting and checking technique. 

Related literature on certification of singular zeroes of polynomial systems include \cite{KanOish97,RumpGrail2010,MantzMourr2011,LiZhi2013,LiZhi2014}. However, these approaches differ from ours in the sense that they certify singular roots of some  small perturbation of the input polynomial system while in the present  paper we certify singular roots of an exact polynomial system with rational coefficients.

\section{Certifying roots of overdetermined systems}\label{Sec:Over}

As described in the Introduction, the philosophy behind our method is to avoid using universal worst case lower bounds as certificates.  Instead, we aim to 
compute an exact univariate representation for our system that can be used for the certification. The Introduction outlined the steps 
of our proposed method to compute such a representation
with details~presented~here.

\subsection{Preliminaries}\label{Section:RUR}

Let us start with recalling the notion of Roullier's {\em Rational Univariate Representation (RUR)}, originally defined in \cite{Rou99}. We follow here the approach and notation in \cite{Grobner-free}.  

Let $\ff=(f_1, \ldots, f_m)\in \Q[x_1, \ldots, x_n]$ for some $m\geq n$, and assume that  the ideal  $\I:=\langle f_1, \ldots, f_m\rangle$ is radical and zero dimensional.  The factor ring $ \Q[x_1, \ldots, x_n]/ \I$ is a finite dimensional vector space over $\Q$, and we denote 
$$
\delta:=\dim_\Q \Q[x_1, \ldots, x_n]/\I
.$$
 Furthermore, for almost all $(\lambda_1, \ldots, \lambda_n)\in \Q^n$ (except a Zariski closed subset), the linear combination  
$$
u(x_1, \ldots, x_n):=\lambda_1 x_1 + \cdots + \lambda_nx_n
$$
is a {\em primitive element} of $\I$, i.e. the powers $1, u, u^2, \ldots, u^{\delta-1}$ form a linear basis for $\Q[x_1, \ldots, x_n]/ \I$ (c.f. \cite{Rou99}). 
Let $q(T)\in \Q[T]$ be the minimal polynomial of $u$ in  $\Q[x_1, \ldots, x_n]/ \I$, and let $x_i=v_i(u)$ be the polynomials expressing the coordinate function as  linear combinations of the powers of~$u$ in $\Q[x_1, \ldots, x_n]/ \I$. Note that from $u=\lambda_1 x_1 + \cdots + \lambda_nx_n$, we must have
$$
u=\lambda_1 v_1(u) + \cdots + \lambda_nv_n(u).
$$
and
$$
\langle
q(T), \; x_1-v_1(T), \ldots, x_n-v_n(T)\rangle= \langle \I, T-(\lambda_1 x_1 + \cdots + \lambda_nx_n)\rangle.
$$

\begin{definition} Let $\I=\langle f_1, \ldots, f_m\rangle \subset \Q[x_1, \ldots, x_n]$ be as above.  The  Rational Univariate Representation (RUR) of $\I$  is given by
\begin{itemize}
\item a primitive element $u=\lambda_1 x_1 + \cdots + \lambda_nx_n$ of $\I$ for some $\lambda_1, \ldots, \lambda_n\in \Q$;
\item the minimal polynomial $q(T)\in \Q[T]$ of $u$ in $\Q[x_1, \ldots, x_n] /I$, a monic  square-free polynomial of degree $\delta$;
\item the parametrization of the coordinates of the zeroes of $\I$ by the zeroes of $q$, given by
$$ v_1(T), \ldots, v_n(T)\in \Q[T]
$$
all degree at most $\delta-1$ and satisfying
$$
\lambda_1 v_1(T) + \cdots + \lambda_nv_n(T)\equiv T \mod q(T).
$$
\end{itemize}
\end{definition}

In the algorithms below, we compute an RUR that may not generate the same ideal $\I$ as our input polynomials, nevertheless 
it contains~$\I$, i.e., the polynomials $f_1, \ldots, f_m$ vanish modulo the RUR. In this case, the RUR will generate a {\em component of $\I$}. The common roots of the RUR of a component of $\I$ correspond to a subset of $V(\I)$.  For any subset $V\subset V(\I)$,
one can construct an RUR of the corresponding component 
of $\I$ that satisfies the following definition. 
Finally, we distinguish between RUR's of components 
and {\em rational components of $\I$}.

\begin{definition}  Let $\I=\langle f_1, \ldots, f_m\rangle\subset \Q[x_1, \ldots, x_n]$ be as above and
fix a primitive element $\lambda_1 x_1 + \cdots + \lambda_nx_n\in \Q[x_1, \ldots, x_n]$ of $\I$. 
The polynomials  
\begin{equation}\label{RUR}
T-\lambda_1 x_1 + \cdots + \lambda_nx_n, \;q(T),\; v_1(T), \ldots, v_n(T)
\end{equation}
form a {\em Rational Univariate Representation (RUR) of a component of $\I$} if it satisfies the following properties:
\begin{itemize}
\item $q(T)\in \C[T]$ is a monic square-free polynomial of degree $d\leq \delta$,
\item $v_1(T), \ldots, v_n(T)\in \C[T]$ are all degree at most $d-1$ and satisfy
$$
\lambda_1 v_1(T) + \cdots + \lambda_nv_n(T)\equiv T \mod q(T),
$$
\item for all $i=1, \ldots, m$ we have 
$$ f_i(v_1(T), \ldots, v_n(T)) \equiv 0 \mod q(T).
$$
\end{itemize}
If, in addition,  $q(T), v_1(T), \ldots, v_n(T)\in \Q[T]$ are rational polynomials, we call (\ref{RUR}) a {\em RUR of a rational component of $\I$}. 
\end{definition}

First note that the set
\begin{eqnarray}\label{GB}
\{ q(T), \; x_1-v_1(T), \ldots, x_n-v_n(T)\}
\end{eqnarray} 
forms a Gr\"obner basis for the ideal 
it generates with respect to the lexicographic monomial ordering defined by $T<x_1<\cdots < x_n$ and is well-constrained.  When we have an RUR of $\I$,  (\ref{GB}) is a Gr\"obner basis~for
$$
\langle \I, T-(\lambda_1 x_1 + \cdots + \lambda_nx_n)\rangle.
$$
In particular, one can compute an RUR of $\I$ with purely symbolic elimination methods, such as Buchberger's algorithm, or resultant based methods. 
Moreover, since (\ref{GB}) is well-constrained, one can apply
standard $\alpha$-theoretic tools to certify solutions.  

 Let 
 $$T-(\lambda_1 x_1 + \cdots + \lambda_nx_n),  \;q(T),\; v_1(T), \ldots, v_n(T)$$ be an RUR for $\I$ and 
 $$ \;q'(T),\; v'_1(T), \ldots, v'_n(T)\in \Q[T]$$ be an RUR for a rational component of $\I$ with respect to the same primitive element. Then, we have 
$$
q(T)\equiv 0   \text{ and } v_i(T)\equiv v_i' (T) \mod q'(T) \text{ for } i=1, \ldots, n.
$$
This shows that we can obtain an RUR for a rational component of $\I$ from an RUR of $\I$ using symbolic univariate polynomial factorization over $\Q$ and univariate polynomial remainders (we only use this in our complexity estimates, not in the computation that we propose). 
 
Next, let us recall the relationship between the RUR of  a component of~$\I$ and the corresponding (exact) roots.   Let  $V:=\{\xi_1, \ldots, \xi_d\}\subseteq V(\I)\subset \C^n$ be the exact roots of a component of $\I$, and denote $\xi_i=(\xi_{i,1}, \ldots, \xi_{i,n})$ for $i=1, \ldots, d$. Then for any $n$-tuple $(\lambda_1, \ldots \lambda_n)\in \Q^n$ such that 
$$
\lambda_1 \xi_{i,1} + \cdots + \lambda_n\xi_{i,n}\neq \lambda_1 \xi_{j,1} + \cdots + \lambda_n\xi_{j,n} \quad \text{ if } i\neq j,
$$
we can define a primitive element $u=\lambda_1 x_1 + \cdots + \lambda_nx_n$ for $V$. Since all roots are distinct, such primitive element exists and a randomly chosen 
\mbox{$n$-tuple} from a sufficiently large finite subset of $\Q^n$ will have this property with high probability (c.f. \cite{Rou99}). 
Fix such $(\lambda_1, \ldots \lambda_n)\in \Q^n$, and define 
\begin{eqnarray}\label{mui}
\mu_i:= \lambda_1 \xi_{i,1} + \cdots + \lambda_n\xi_{i,n}, \quad i=1, \ldots d.
\end{eqnarray}
Then,
\begin{eqnarray}\label{qT}
q(T):= \prod_{i=1}^d (T-\mu_i)
\end{eqnarray} is the unique monic polynomial of degree $d$ vanishing at the values of 
$u$ corresponding to each point in $V$. 
For each $j$, $v_j(T)$ in the 
parametrization of the coordinates  is the unique 
Lagrange~interpolant~satisfying
\begin{eqnarray}\label{vj}
v_j(\mu_i) = \xi_{i,j} \quad \text{ for } i=1, \ldots, d.
\end{eqnarray}

Unfortunately, the common roots of $\I$ may be irrational, so numerical methods will compute only approximations to them. However, the coefficients of the RUR  of $\I$ are rational numbers, so we may  compute them exactly. This is not always true for RUR's of components of $\I$, only for RUR's of rational components. Below, we will show how to recover the exact coefficients of the RUR of a rational component of $\I$ from numerical approximations of the roots of $\I$.  We start with giving heuristics for finding a good initial RUR which we will use as a starting point for 
iterative methods which are locally convergent to the exact RUR. 

\subsection{Initialization}\label{init}

As shown later, the iterative methods we use are {\em locally convergent}, so we need an initial RUR that is sufficiently close to the exact one in order 
to have convergence.  In this subsection, we discuss heuristics to find a good initial approximate RUR that we can use as the starting point for our iteration. 


More precisely, to compute an initial RUR for a rational component of~$\I$ we propose the following:
 
\begin{description}
\item[1. Homotopy method.]  Let $\ff=(f_1, \ldots, f_m)\in \Q[x_1, \ldots, x_n]$ for some $m>n$ be the defining equations of $\I$.  As in \cite[Section 3]{Hauenstein-Sotille}, for any linear map $R:\Q^m\rightarrow \Q^n$ which we will also consider as a matrix $R\in\Q^{n\times n}$, we define the well-constrained system $R(\ff):=R\circ f$. For almost all choices of $R\in \Q^{n\times m}$, the ideal generated by $R(\ff)$ is zero dimensional and radical.  We fix such an $R\in \Q^{n\times m}$ and throughout this paper we use the notation 
\begin{equation}\label{FF}
\bfF=(F_1, \ldots, F_n):= R(\ff).
\end{equation} 
In this step, we assume that by using numerical homotopy algorithms 
(c.f. \cite{BHSW13,SW05}), we have computed 
{\em approximate roots} for each root in  $V(\bfF)$, i.e.,  local Newton's method with respect to $\bfF$ is quadratically convergent starting from these approximate roots (see \cite[Section~2]{Hauenstein-Sotille} on how to certify approximate roots of well-constrained systems).  Using $\alpha$-theory for $\bfF$, we can estimate the distance
from each of these approximate roots to the exact ones. 

Denote an upper bound for these distances by $\varepsilon$. 

\item[2. Candidates for roots of a component of $\I$.] To find the subset of approximate roots to $V(\bfF)$ that approximates the roots in (a component of) $V(\I)$, we propose several methods. The first one is to choose the roots that has residuals for all $f_i$ for $i=1, \ldots, m$ up to a given tolerance $t$. The tolerance $t$ can be chosen based on $\varepsilon$ defined above in Step 1, and the height and degree of each of the polynomials in $\ff$.  Another approach could incorporate the ideas in \cite[Section~3]{Hauenstein-Sotille} to exclude the roots that are {\em not} approximations of $V(\I)$ by comparing the approximate roots of $R(\ff)$ to the approximate roots of an other random square subsystem $R'(\ff)$ for some $R'\in \Q^{n\times m}$. Finally, if we know that $\I$ or a rational component of $\I$ has only a very small number of common zeroes, then we can check  all subsets of the roots computed in Step 1 that have that cardinality as candidates for the common roots of a~component~of~$\I$. 

Denote the cardinality of the roots selected in this step by $d$. 

\item[3. Initial RUR.]  In this step, we compute an approximation to the RUR of a component of $\I$ from the $d$ candidates selected in the previous step using (\ref{mui}), (\ref{qT}) and (\ref{vj}). In Section~\ref{local}, we will discuss the sensitivity of this step to the perturbation of the roots, and give an upper bound for the distance of the approximate RUR computed from the~exact~one. 

\end{description}

As mentioned in the Introduction, we propose two iterative methods for increasng the precision of the approximate RUR: in the first we use {\em Local Newton Iteration} and the second we use {\em Global Newton Iteration}.  These are the subjects of the next two subsections.

\subsection{Increasing the Precision of the RUR using Local Newton Iteration}\label{local}

The main idea of increasing the precision of the RUR by local Newton iteration is very simple: we repeat the computation of   (\ref{mui}), (\ref{qT}), and (\ref{vj}) in the approximate RUR with ever more accurate roots that are computed by independently 
applying Newton's  iteration to each of the $d$~approximate~roots.

Here, we analyze the accuracy that we can achieve for the approximate RUR after $k$ iterations. 

Let $\bfF=(F_1, \ldots, F_n)$ be as in (\ref{FF}).  Let $\{\bfz^{(0)}_1, \ldots, \bfz^{(0)}_d\}\subset \C^n$ be the  the $d$ approximate roots we identified in Step 2 of Subsection \ref{init}, and let $\{\bfz^*_1, \ldots, \bfz^*_d\}\subset \C^n$ be the corresponding exact roots in $V(\bfF)$.   For each $i=1, \ldots, d$ and $k\geq 0$ we define the $(k+1)$-th Newton iterate by 
$$
 \bfz^{(k+1)}_i := \bfz^{(k)}_i - J_\bfF(\bfz^{(k)}_i )^{-1} F(\bfz^{(k)}_i ), 
 $$ 
 where $J_\bfF(\bfz^{(k)}_i )$ is the $n\times n$ Jacobian matrix of $\bfF$ evaluated at $\bfz^{(k)}_i$, which we assume to be invertible. Then, using our assumption in Step 1 of Subsection~\ref{init}, namely  that for all $i=1, \ldots, d$,
 $$
\| \bfz^{(0)}_i-\bfz^*_i\|_2\leq \varepsilon,
$$
and that the Newton iteration is quadratically convergent from each $\bfz^{(0)}_i$ to $\bfz^*_i$, we get that (using for example    \cite{BlCuShSm}) 
$$
\| \bfz^{(k)}_i-\bfz^*_i\|_2\leq \varepsilon\left(\frac{1}{2}\right)^{2^k-1}.
$$

Next, we analyze the possible loss of precision when applying (\ref{qT}) and~(\ref{vj}) in the computation of the approximate RUR.  Note that the Lagrange interpolation step in (\ref{vj}) may involve exponential loss of precision if we are not careful. In particular,   the  condition number of the Lagrange interpolant in the monomial basis is exponential in the number of nodes $d$ even if the nodes $\mu_i$ are in the interval $[-1, 1]$ (c.f.  \cite{Gautschi84}). However, using an orthogonal polynomial basis such as Chebyshev polynomials, the condition number of  the Lagrange interpolant in this basis  is linear in $d$, assuming that the nodes $\mu_i$ are in the interval $[-1,1]$  (it is $\sqrt{2}d$ for Chebyshev polynomials, c.f. \cite{Gautschi84}). Choosing an appropriate primitive element $u=\lambda_1 x_1 + \cdots + \lambda_nx_n$ ensures that all nodes $\mu_i$ are in $[-1,1]$.  Finally, we can convert the Chebyshev basis back to monomial bases by solving a triangular linear system with condition number at most $d2^{d-1}$ (c.f. \cite[Lemma 2]{GiesLabLee04}). 

Denote by $q^{(k)}(T)$ and $\vv^{(k)}(T)=(v_1^{(k)}(T), \ldots, v_n^{(k)}(T))$ the approximate RUR corresponding to $\{\bfz^{(k)}_1, \ldots, \bfz^{(k)}_d\}$, and  let $q^*(T)$ and $\vv^*(T)=(v_1^*(T), \ldots, v_n^*(T))$ be the exact RUR corresponding to $\{\bfz^*_1, \ldots, \bfz^*_d\}$. Then, using the above argument, we get the following bound for the error of the coefficients of the interpolation polynomials:
\begin{eqnarray}\label{errorbound}
\|v_j^{(k)}(T)-v_j^{*}(T)\|_2 \leq \varepsilon
d^2\left(\frac{1}{2}\right)^{2^k-1-d},
\end{eqnarray}
 which converges to zero as $k\rightarrow \infty$, since $d$ and $\varepsilon$ are fixed throughout our iteration.  We can use the same bound as in the right hand side of (\ref{errorbound}) for the accuracy of the polynomial $q^{(k)}(T)$. Note that to get the above bound we assume that there is no roundoff error in our computations, only the approximation error from the roots.
 
 \subsection{Increasing the Precision of the RUR using Global Newton Method}\label{global}

In this section, we give an adaptation of the global Newton method (also called multivariate Hensel lifting or Newton-Hensel lifting) from the non-Archimedean metric defined by a p-adic valuation used,
 for example, in~\cite{Kaltofen85,Trinks85,Winkler88,Grobner-free} to the Euclidean metric defined by the usual absolute value on $\Q$ or $\C$. Global Newton method increases the accuracy of the approximate RUR directly, using polynomial arithmetics, without using approximations of the roots, as shown below.
 
The recent work \cite{HausPanSza2014} includes several versions of the global Newton method that are shown to be equivalent when the coefficient ring possesses a p-adic absolute value, but different when considered over a coefficient field such as $\Q$ with the usual absolute value. One of these versions of the global Newton method is equivalent to that obtained from the local Newton iteration described above. Moreover, 
the parallel complexity of the different versions of the global Newton iteration
are compared in \cite{HausPanSza2014}, and demonstrated that they can be efficiently parallelized. 
 
 In order to make this paper self contained,  we recall a version of the global Newton method presented in \cite{HausPanSza2014} which is not equivalent to the one  in the previous  section, giving an alternative method to increase the accuracy. As we will see below, this version  relates to  some higher dimensional local Newton iteration. As a consequence, we prove the local quadratic convergence to the exact rational univariate representation with additional details provided in \cite{HausPanSza2014}. 
  
 Given $\bfF=(F_1, \ldots, F_n)$ and $u=\sum_{i=1}^n \lambda_ix_i$ in $\Q[x_1, \ldots, x_n]$ as before, we define the map $$\Phi:\Q^{(n+1)d}\rightarrow \Q^{(n+1)d}$$ as the map of the coefficient vectors of the following degree $d-1$ polynomials:
\begin{equation}\label{Phi}
\Phi : \;\;\left[\begin{array}{c}
v_1(T) \\ \vdots\\ v_n(T)\\ \Delta q(T) 
\end{array}\right] \mapsto \left[\begin{array}{c}
F_1(\vv(T))\hspace{-.3cm}\mod q(T)\\
 \vdots\\
F_n(\vv(T))\hspace{-.3cm}\mod q(T)\\
\sum_{i=1}^n \lambda_i v_i(T) -T
  \end{array}\right],
\end{equation}
where 
$$
 \Delta q(T):= q(T)-T^d.
 $$
If $u, q(T), v_1(T), \ldots, v_n(T)$ is an exact RUR of a component of $\langle \bfF\rangle$ then clearly
$$
\Phi\left(v_1(T), \ldots, v_n(T), \Delta q(T) \right)=0.
$$
We apply the $(n+1)d$ dimensional Newton iteration  to locally converge  to the coefficient vector  of an exact RUR which is a zero of $\Phi$. Note that below we will consider the map $\Phi$ as a map between 
$$
\Phi:\left( \Q[T]/\langle q(T) \rangle \right)^{n+1} \rightarrow \left( \Q[T]/\langle q(T) \rangle \right)^{n+1}, 
$$
and that $\left( \Q[T]/\langle q(T) \rangle \right)^{n+1}$ and $\Q^{(n+1)d}$ are isomorphic as vectors spaces. As was shown in  \cite{HausPanSza2014}, the Newton iteration for $\Phi$  respects the algebra structure of  $\left( \Q[T]/\langle q(T) \rangle \right)^{n+1}$ as well.

The first lemma gives the Jacobian matrix of $\Phi$.

\begin{lemma}[\cite{HausPanSza2014}] \label{PhiLemma} Let $\bfF=(F_1, \ldots, F_n)$, u, $q(T)$, $\vv(T)$ and $\Phi$ be as above.  For $i=1, \ldots, n$ define $m_i(T)$ and $r_i(T)$ as  the quotient  and remainder in the following division with remainder:
\begin{equation}\label{rm}
F_i(\vv(T))=m_i(T)q(T) + r_i(T).
\end{equation}
Then the Jacobian matrix of $\Phi$ defined in (\ref{Phi}) respects the algebra structure of  $\left( \Q[T]/\langle q(T) \rangle \right)^{n+1}$, and   is given by
\begin{equation}\label{JacobPhi}
J_\Phi( \vv(T) ,\Delta q(T))
:=
\begin{array}{|ccc|c|c}
\multicolumn{3}{c}{\scriptsize{n}}&\multicolumn{1}{c}{\scriptsize{1}}\\
\cline{1-4}
      & & & -m_1(T)&\\
      & J_\bfF(\vv(T)) & &\vdots& n\\
      & & & -m_n(T)&\\
      \cline{1-4}
   \lambda_1& \cdots &\lambda_n & 0 & 1  \\
\cline{1-4} \multicolumn{2}{c}{}
\end{array}\mod q(T).
\end{equation}
\end{lemma}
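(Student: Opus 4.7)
The plan is to compute the Jacobian of $\Phi$ by a direct linearization argument: pick a basepoint $(\vv^*(T), \Delta q^*(T))$, perturb by $(\epsilon \bfh(T), \epsilon \Delta g(T))$ with $\bfh = (h_1,\dots,h_n)$ of degree $<d$, and read off the coefficient of $\epsilon$ in $\Phi(\vv^* + \epsilon \bfh, \Delta q^* + \epsilon \Delta g) \mod (q^*(T) + \epsilon \Delta g(T))$. Because $\Phi$ is built from polynomial arithmetic and polynomial division, each coordinate of the derivative will turn out to be a polynomial (not just coefficient-wise linear) expression in $\bfh, \Delta g$ modulo $q^*$, which is precisely the content of the claim that the Jacobian ``respects the algebra structure''.

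The last row is immediate. Since the $(n{+}1)$-st coordinate of $\Phi$ is $\sum_i \lambda_i v_i(T) - T$, a linear function of the $v_i$'s that does not involve $\Delta q$, its directional derivative in the direction $(\bfh, \Delta g)$ equals $\sum_i \lambda_i h_i(T)$. This gives the bottom row $(\lambda_1, \dots, \lambda_n \mid 0)$.

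For the first $n$ rows I would treat the two groups of variables separately. For the derivative in the $v_j$ directions, the chain rule on $F_i(\vv(T))$ together with reduction modulo $q^*(T)$ yields
\[
\frac{d}{d\epsilon}\Big|_{\epsilon=0} F_i(\vv^* + \epsilon \bfh) \bmod q^* \;=\; \sum_{j=1}^n \frac{\partial F_i}{\partial x_j}(\vv^*(T))\, h_j(T) \bmod q^*(T),
\]
so the upper-left block is exactly $J_\bfF(\vv(T)) \bmod q(T)$, acting by multiplication in $\Q[T]/\langle q(T)\rangle$. For the derivative in the $\Delta q$ direction, use the division identity \eqref{rm}: writing $F_i(\vv^*) = m_i q^* + r_i$, we have $F_i(\vv^*) \bmod (q^* + \epsilon \Delta g) = (m_i q^* + r_i) \bmod (q^* + \epsilon \Delta g) = r_i - \epsilon\, m_i \Delta g \bmod (q^* + \epsilon \Delta g)$ (replacing $q^* = (q^* + \epsilon \Delta g) - \epsilon \Delta g$ inside the product $m_i q^*$), and the linear part in $\epsilon$ reduces to $-m_i(T) \Delta g(T) \bmod q^*(T)$. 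This yields the right column $(-m_1, \ldots, -m_n)^\top$.

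The only delicate point is the bookkeeping in the division-with-remainder computation, namely confirming that when both $q$ and the dividend $F_i(\vv)$ vary simultaneously the $O(\epsilon^2)$ terms do not contaminate the linear part after the final reduction; this is routine once one notes that $m_i$ and $r_i$ depend smoothly on the coefficients of $q$ via Cramer's rule applied to the Sylvester-style linear system determining the quotient and remainder of division by a monic polynomial. Assembling the two blocks and the bottom row produces the claimed matrix \eqref{JacobPhi}, with all entries interpreted as multiplication operators in $\Q[T]/\langle q(T)\rangle$; the identification of $\left(\Q[T]/\langle q(T)\rangle\right)^{n+1}$ with $\Q^{(n+1)d}$ as vector spaces then furnishes the Jacobian in the coefficient coordinates used to define $\Phi$.
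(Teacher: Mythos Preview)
Your argument is correct. The paper itself does not prove this lemma; it is quoted from \cite{HausPanSza2014} and stated without proof here, so there is nothing to compare against except to confirm that your direct linearization is the standard computation one would expect.

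One small point of exposition: in your right-column computation you write
\[
F_i(\vv^*) \bmod (q^* + \epsilon \Delta g) \;=\; r_i - \epsilon\, m_i \Delta g \bmod (q^* + \epsilon \Delta g),
\]
but $r_i - \epsilon\, m_i \Delta g$ need not have degree $<d$, so it is only a representative of the residue class, not the actual remainder. To extract the genuine first-order term one must reduce once more: writing $m_i \Delta g = s\, q^* + t$ with $\deg t < d$, the remainder becomes $r_i - \epsilon t + O(\epsilon^2)$, so the derivative is $-\,(m_i \Delta g \bmod q^*)$, i.e.\ multiplication by $-m_i$ in $\Q[T]/\langle q^*\rangle$, as claimed. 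You gesture at this with your remark about Cramer's rule on the Sylvester system, which is a valid (if heavier) way to justify smoothness of the remainder map; the elementary reduction above is more direct and makes the $\bmod\, q$ on the matrix entries transparent.
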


 The definition below is given using polynomial arithmetic modulo $q(T)$, but as we will see below, it is equivalent to the Newton iteration corresponding to $\Phi$ as a map defined on $\Q^{(n+1)d}$. For the definition to be well defined we need the following assumptions:
 
 \begin{assumption}\label{assu3} Let $\bfF$, $u=\sum_{i=1}^n \lambda_ix_i$, $q(T)$ and $\vv(T)$ polynomials over $\Q$  as above. We assume that 
\begin{enumerate}
\item $q(T)$ is monic and has degree $d$, 
\item $v_i(T)$ has degree at most $d-1$,

\item $\frac{\partial q(T)}{\partial T}$ is invertible modulo $q(T)$,
\item  $\lambda_1v_1(T)+\cdots +\lambda_n v_n(T) = T$,
\item    $J_\bfF(\vv(T))  := \left[\frac{\partial F_i}{\partial x_j}(\vv(T))\right]_{i,j=1}^n$  is invertible modulo $q(T)$.
\item $J_\Phi:=J_\Phi(\vv(T), \Delta q(T))$ defined in (\ref{JacobPhi}) is invertible modulo $q(T)$.
\end{enumerate}
\end{assumption}

\begin{definition}\label{constr3} If $\bfF(x_1, \ldots, x_n)$, $u(x_1, \ldots, x_n)$, $q(T)$ and $\vv(T)$ are polynomials over $\Q$ satisfying Assumption \ref{assu3}, we define 
{\small \begin{eqnarray*}
&&u=\sum_{i=1}^n\lambda_i x_i=T, \\
&&\bfw(T) :=\vv(T)-   \left(J_\bfF (\vv(T))^{-1}\bfF(\vv(T))\mod q(T)\right), \\
\end{eqnarray*}
\begin{eqnarray}
&&\Delta(T):=\sum_{i=1}^n \lambda_i w_i(T) -T,  \\
&& \bfr(T):= \bfF(\vv(T))\mod q(T),\label{r}\\
&& \bfU(T):=  \frac{\partial \vv(T)}{\partial T} -  \left(J_\bfF (\vv(T))^{-1}\frac{\partial \bfr(T)}{\partial T} \mod q(T)\right), \label{U}\\
&&\Lambda(T):= \sum_{i=1}^n \lambda_i U_i(T)\label{Lambda} \text{ that we will show to be invertible }\hspace{-.4cm}\mod q(T),\\
\label{bVV}&&{\bfV}(T):= \bfw(T)- \left( \frac{\Delta(T)}{\Lambda(T)} \bfU(T)\mod q(T)\right),\\
&&{Q}(T):=q(T) - \left(\frac{\Delta(T)}{\Lambda(T)}\frac{\partial q(T)}{\partial T} \mod q(T) \right).  \label{bQ}
\end{eqnarray}}
\end{definition}

The next proposition shows that ${\bfV}(T)$ and $ {Q}(T)$ from Definition \ref{constr3} are the Newton iterates for the function $\Phi$. 

\begin{proposition}[\cite{HausPanSza2014}]\label{mainprop}
Let $\bfF$, u, $q(T)$, $\vv(T)$ and $\Phi$ be such that  Assumption \ref{assu3} holds. Then, 
$\Lambda(T)$ defined in (\ref{Lambda}) is invertible modulo $q(T)$,   and thus ${\bfV}(T)$ and $ {Q}(T)$ are well  defined in Definition \ref{constr3}. Furthermore 
{\small \begin{equation}
\left[\begin{array}{c}{\bfV}(T)\\
 {Q}(T)-T^d\end{array}\right] = \left[\begin{array}{c}\vv(T)\\q(T)-T^d\end{array}\right]- J_\Phi^{-1}\cdot \left[\begin{array}{c}\bfF(\vv(T))\\\sum_{i=1}^n \lambda_i v_i(T)-T\end{array}\right] \mod q(T),
\end{equation}}
where the vector on the right hand side is $\Phi(\vv(T), q(T)-T^d)$. Finally, we also have that 
$$
\sum_{i=1}^n\lambda_i{V}_i(T) =T.
$$
\end{proposition}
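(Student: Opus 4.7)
The plan is to verify the three claims by directly solving the $2\times 2$ block linear system that defines the Newton iterate for $\Phi$, and then to match the closed‑form solution against the expressions in Definition~\ref{constr3}. Throughout I will work in the ring $R:=\Q[T]/\langle q(T)\rangle$ so that every quantity is reduced modulo $q(T)$.

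First, I would establish the invertibility of $\Lambda(T)$. Differentiating the division identity $\bfF(\vv(T))=\bfm(T)q(T)+\bfr(T)$ with respect to $T$ and reducing modulo $q(T)$ yields
\[
J_\bfF(\vv(T))\,\vv'(T)\equiv \bfm(T)\,q'(T)+\bfr'(T)\pmod{q(T)},
\]
so by Assumption~\ref{assu3}(5) I can solve for $\vv'(T)$ and deduce
$\bfU(T)\equiv J_\bfF(\vv(T))^{-1}\bfm(T)\,q'(T)\pmod{q(T)}$, and therefore
$\Lambda(T)\equiv q'(T)\cdot \boldsymbol{\lambda}^T J_\bfF(\vv(T))^{-1}\bfm(T)\pmod{q(T)}$.
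Now $\boldsymbol{\lambda}^T J_\bfF^{-1}\bfm$ is (up to a sign) the Schur complement of $J_\bfF$ in the block form of $J_\Phi$ shown in \eqref{JacobPhi}; the invertibility of $J_\Phi$ assumed in Assumption~\ref{assu3}(6), together with the invertibility of $J_\bfF$, forces the Schur complement to be a unit in $R$. Combined with Assumption~\ref{assu3}(3) that $q'(T)$ is a unit in $R$, this shows $\Lambda(T)$ is invertible, so $\bfV(T)$ and $Q(T)$ in Definition~\ref{constr3} are well defined.

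Next, I would compute $J_\Phi^{-1}\cdot\Phi(\vv(T),q(T)-T^d)$. By Assumption~\ref{assu3}(4) the second coordinate of $\Phi(\vv(T),q(T)-T^d)$ vanishes, so the system reduces to finding $(a,b)\in R^n\times R$ with
\[
J_\bfF(\vv(T))\,a-\bfm(T)\,b=\bfr(T),\qquad \boldsymbol{\lambda}^T a=0.
\]
Eliminating $a=J_\bfF^{-1}\bfr+J_\bfF^{-1}\bfm\,b$ from the second equation gives $b=-S^{-1}\boldsymbol{\lambda}^T J_\bfF^{-1}\bfr$ with $S=\boldsymbol{\lambda}^T J_\bfF^{-1}\bfm$. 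Using $\Delta(T)=\boldsymbol{\lambda}^T\bfw(T)-T=-\boldsymbol{\lambda}^T J_\bfF^{-1}\bfr(T)$ (via Assumption~\ref{assu3}(4)) and $\Lambda(T)=q'(T)S$, one checks that
\[
\vv(T)-a=\bfw(T)-\tfrac{\Delta(T)}{\Lambda(T)}\bfU(T)=\bfV(T),\qquad q(T)-b=q(T)-\tfrac{\Delta(T)}{\Lambda(T)}q'(T)=Q(T)
\]
modulo $q(T)$, which is exactly the claimed Newton-iterate identity. The main bookkeeping obstacle here is tracking the signs and the factor $q'(T)$ correctly when converting between the Schur-complement formula and the expressions in Definition~\ref{constr3}; everything else is a routine block-matrix manipulation.

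Finally, for $\sum_i\lambda_i V_i(T)=T$, I would simply take $\boldsymbol{\lambda}^T$ of the identity $\bfV=\bfw-(\Delta/\Lambda)\bfU$: since $\boldsymbol{\lambda}^T\bfw=T+\Delta$ and $\boldsymbol{\lambda}^T\bfU=\Lambda$, the two $\Delta$ terms cancel and leave $T$. Equivalently, this is immediate from the second equation $\boldsymbol{\lambda}^T a=0$ of the block system, giving $\boldsymbol{\lambda}^T\bfV=\boldsymbol{\lambda}^T\vv-\boldsymbol{\lambda}^T a=T$.
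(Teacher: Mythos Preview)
The paper does not actually give its own proof of this proposition; it is simply quoted from \cite{HausPanSza2014}. Your argument is correct and is precisely the block-elimination/Schur-complement computation one would expect: the derivation of $\bfU\equiv J_\bfF^{-1}\bfm\,q'$ by differentiating the division identity, the identification of $\boldsymbol{\lambda}^T J_\bfF^{-1}\bfm$ as the Schur complement of $J_\bfF$ in $J_\Phi$ (no sign ambiguity is in fact needed --- the off-diagonal block is $-\bfm$, so the Schur complement is exactly $\boldsymbol{\lambda}^T J_\bfF^{-1}\bfm$), and the back-substitution yielding $b=\Delta q'/\Lambda$ and $a=(\vv-\bfw)+(\Delta/\Lambda)\bfU$ are all sound, and they match the formulas \eqref{bVV}--\eqref{bQ}. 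The last claim follows exactly as you say.
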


As a corollary, we know that the approximate RUR defined by the iteration in Definition \ref{constr3} is locally quadratically
and converges to an exact RUR of a component of $\langle \bfF\rangle$. 

\begin{corollary}
The iteration defined by Definition \ref{constr3} is locally quadratically convergent to an exact RUR of a component of $\langle \bfF \rangle $, as long as Assumption \ref{assu3} is satisfied in each iteration.
\end{corollary}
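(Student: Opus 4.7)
The plan is to reduce the corollary to the standard local convergence theorem for Newton's method applied to a smooth map between finite-dimensional vector spaces. The key structural input has already been provided: Proposition~\ref{mainprop} identifies the update rule of Definition~\ref{constr3} with the genuine Newton iterate of $\Phi$ at $(\vv(T),q(T)-T^d)$, once we regard $\left(\Q[T]/\langle q(T)\rangle\right)^{n+1}$ as the $(n+1)d$-dimensional $\Q$-vector space $\Q^{(n+1)d}$ via coefficient vectors.

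First I would spell out that an exact RUR of a component of $\langle \bfF\rangle$ corresponds precisely to a zero of $\Phi$: the first $n$ block components of $\Phi$ encode the vanishing conditions $F_i(\vv(T))\equiv 0\pmod{q(T)}$, while the last component encodes $\sum_i \lambda_i v_i(T)\equiv T\pmod{q(T)}$. Thus, let $(\vv^*(T),\Delta q^*(T))$ be such a zero corresponding to a fixed exact RUR $(q^*,\vv^*)$ with $q^*$ monic of degree $d$. By Assumption~\ref{assu3}(6) at the limit, $J_\Phi$ is invertible at $(\vv^*,\Delta q^*)$, and by Lemma~\ref{PhiLemma} the map $\Phi$ is a polynomial map in the coefficients of $\vv(T)$ and $\Delta q(T)$, hence $C^\infty$.

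Next I would invoke the classical quadratic convergence theorem for Newton's method (e.g.\ as in \cite{BlCuShSm}): if $\Phi:\Q^{(n+1)d}\to \Q^{(n+1)d}$ is smooth, $\Phi(\bfz^*)=0$, and $J_\Phi(\bfz^*)$ is invertible, then there exists a neighborhood $\U$ of $\bfz^*$ such that for any starting point $\bfz^{(0)}\in\U$, the Newton iterates $\bfz^{(k+1)}:=\bfz^{(k)}-J_\Phi(\bfz^{(k)})^{-1}\Phi(\bfz^{(k)})$ are well defined, remain in $\U$, and satisfy $\|\bfz^{(k+1)}-\bfz^*\|\leq C\|\bfz^{(k)}-\bfz^*\|^2$ for some constant $C>0$. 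Applying this to our $\Phi$ and combining with Proposition~\ref{mainprop}, the sequence of pairs $(\bfV(T), Q(T)-T^d)$ produced by Definition~\ref{constr3} converges quadratically to $(\vv^*(T),\Delta q^*(T))$, which is the desired exact RUR.

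Finally, I would verify that the hypothesis ``Assumption~\ref{assu3} is satisfied in each iteration'' keeps the iteration well defined along the trajectory: conditions (1)--(2) hold automatically by construction (inspection of equations (\ref{bVV})--(\ref{bQ}) shows $Q(T)$ stays monic of degree $d$ and each $V_i(T)$ has degree at most $d-1$ mod $q(T)$), condition (4) follows from the last assertion of Proposition~\ref{mainprop}, and conditions (3), (5), (6) are precisely what allow $J_\Phi^{-1}$ and $\Lambda(T)^{-1}$ to be formed. The only genuine obstacle is the invertibility condition (6) on $J_\Phi$ along the iteration, but in a sufficiently small neighborhood of the limit point this is automatic by continuity of $\det J_\Phi$, so on restricting the starting neighborhood further, the assumption needs only be checked at the initial iterate. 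This completes the local quadratic convergence claim.
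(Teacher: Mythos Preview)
Your proposal is correct and takes essentially the same approach as the paper: the paper states the corollary without any explicit proof, treating it as an immediate consequence of Proposition~\ref{mainprop} (which identifies the iteration with Newton's method for $\Phi$) together with standard local quadratic convergence of Newton's method as in \cite{BlCuShSm}. Your write-up simply makes this implicit argument explicit, including the observation that $\Phi$ is a polynomial (hence smooth) map on $\Q^{(n+1)d}$ and that invertibility of $J_\Phi$ persists in a neighborhood by continuity.
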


In the next section, we will need estimates on the accuracy of our iterates after $k$ iterations of the global Newton method. Here, we can use $\alpha$-theory applied to the function $\Phi$ as described in 
\cite[Section 2]{Hauenstein-Sotille} to certify that the initial RUR computed in Subsection \ref{init} has the property of quadratic convergence and bound the distance to the exact RUR it converges to. 

Let $Q^{(0)}(T)$ and $\bfV^{(0)}(T)=(V_1^{(0)}(T), \ldots, V_n^{(0)}(T))$ denote our initial approximate RUR, and assume that it quadratically converges to the exact zero $Q^*(T)$ and 
$\bfV^*(T)=(V_1^*(T), \ldots, V_n^*(T))$ of the map $\Phi$. Note that this exact RUR may not be an RUR of a rational component of $\I$, i.e. it may not have rational coefficients.  At this stage, 
we will not be able decide its rationality.
Assume that we have a number $\nu$ such that
$$\| Q^{(0)}(T)-Q^{*}(T)\|_2\leq \nu , \quad \|V_i^{(0)}(T)- V_i^*(T)\|_2\leq \nu \quad i=1, \ldots, n, 
$$
where the norm is the usual Euclidean norm of the coefficient vectors of the polynomials over  $\C$.
Then, using the quadratic convergence of the global Newton iteration, according to \cite{BlCuShSm}, we get the following bound for the error of the coefficients of the polynomials in the $k$-th iteration, denoted by $Q^{(k)}(T)$ and $\bfV^{(k)}(T)=(V_1^{(k)}(T), \ldots, V_n^{(k)}(T))$:
\begin{eqnarray}\label{errorbound2}
\| Q^{(k)}(T)-Q^{*}(T)\|_2\leq \nu\left(\frac{ 1}{2}\right)^{2^k-1},\quad  \|V_i^{(k)}(T)-V_i^{*}(T)\|_2 \leq \nu\left(\frac{ 1}{2}\right)^{2^k-1}
\end{eqnarray}
which converge to zero as $k\rightarrow \infty$.  
 
\subsection{Rational Number Reconstruction}\label{rnr}

Below, we will show how to find the exact RUR of a rational component of $\I$ once we computed a sufficiently close approximation of it. The main idea is that we can reconstruct the unique rational numbers that have bounded denominators and indistinguishable from the coefficients of the polynomials in our approximate RUR within their accuracy estimates. Then, we can use purely symbolic methods to check whether the  RUR with the bounded rational coefficients is an exact RUR for a component of our input system~$\ff$. 
Here, we recall the theory behind rational number reconstruction, and in the next section we detail our ``end game,'' i.e. conditions for termination.
 
Since the coordinates of the approximate roots are given as floating point complex numbers, we can consider them as Gaussian rational in $\Q(i)$, and the same is true for the coefficients of the approximate RUR computed from these roots in Subsection \ref{local}. However, since the exact RUR has rational coefficients, we will neglect the imaginary part of the coefficients of the approximate RUR. Therefore, we will assume that the coefficients of the approximate RUR are 
in $\Q$, given as floating point numbers.

In this section, we use rational number reconstruction for each coefficient of the approximate RUR that was computed in the previous subsections. The following classical result implies that if a  number is sufficiently close to a rational number with small denominator, then we can find this latter rational number in polynomial time (c.f. \cite[Corrolary 6.3a]{Schrijver1986} or \cite[Theorem~5.26]{vzGathenGerhard1999}).

\begin{theorem}[\cite{Schrijver1986,vzGathenGerhard1999}] \label{ratrec} There exists a polynomial time algorithm which, for a given rational number $c$ and a natural number $B$ tests if there exists a pair of integers $(z,d)$ with $1\leq d\leq B$ and 
$$|c-z/d|<\frac{1}{2B^2},
$$ and if so, finds this unique pair of integers. 
\end{theorem}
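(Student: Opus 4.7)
The plan is to reduce the claim to the classical theory of continued fractions (equivalently, the extended Euclidean algorithm), which is exactly the algorithmic engine used for rational number reconstruction.

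First I would establish \emph{uniqueness}. Suppose two pairs $(z_1,d_1)$ and $(z_2,d_2)$ with $1\le d_i\le B$ both satisfy $|c-z_i/d_i|<1/(2B^2)$. By the triangle inequality,
\[
\left|\frac{z_1}{d_1}-\frac{z_2}{d_2}\right|<\frac{1}{B^2}.
\]
Multiplying through by $d_1d_2\le B^2$ gives $|z_1d_2-z_2d_1|<1$. Since the left-hand side is a nonnegative integer, it vanishes, so $z_1/d_1=z_2/d_2$. If both pairs are assumed to be in lowest terms (which one may enforce without loss of generality when stating the output), they coincide.

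For \emph{existence and construction}, I would invoke the best-approximation property of the continued-fraction convergents of $c$. Writing $c=p/q$ in lowest terms, compute the continued fraction expansion of $c$ via the extended Euclidean algorithm applied to $(p,q)$; this produces the convergents $p_k/q_k$ together with the identities $p_kq_{k-1}-p_{k-1}q_k=(-1)^{k-1}$, and the denominators $q_k$ form a strictly increasing sequence until termination. The key lemma from continued-fraction theory states that every rational $z/d$ in lowest terms with $d\le B$ satisfying $|c-z/d|<1/(2d^2)$ must be one of the convergents $p_k/q_k$ of $c$. Since our hypothesis $|c-z/d|<1/(2B^2)\le 1/(2d^2)$ is even stronger, the candidate pair — if it exists — is a convergent. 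The algorithm therefore computes convergents until $q_k>B$, outputs $(p_{k-1},q_{k-1})$, and verifies directly whether $|c-p_{k-1}/q_{k-1}|<1/(2B^2)$; it answers ``no'' precisely when the verification fails.

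For \emph{polynomial runtime}, I would note that the extended Euclidean algorithm on $(p,q)$ runs in time polynomial in the bit length of $c$ and of $B$, and produces at most $O(\log q)$ convergents, each requiring a bounded number of arithmetic operations on integers of controlled size. The final verification $|c-p_{k-1}/q_{k-1}|<1/(2B^2)$ is a comparison of rational numbers whose numerators and denominators have polynomially bounded bit length, so it is also polynomial time. The main obstacle is the best-approximation lemma linking the error bound $1/(2B^2)$ to the sequence of convergents; everything else is a straightforward invocation of the Euclidean algorithm and uniqueness check.
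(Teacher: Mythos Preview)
The paper does not supply its own proof of this theorem; it is quoted as a classical result with references to Schrijver and von~zur~Gathen--Gerhard, and the surrounding text simply remarks that the extended Euclidean algorithm (equivalently, continued fractions) is the tool used for rational number reconstruction. Your argument is exactly the standard one those references contain---uniqueness via the triangle inequality and the integrality of $z_1d_2-z_2d_1$, existence via Legendre's best-approximation lemma for convergents, and polynomial time via the Euclidean algorithm---so there is nothing to compare: you have filled in precisely what the paper outsources to the literature.
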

To compute the pair $(z,d)\in \Z^2$ for each coefficient $c$ appearing in the approximate RUR computed in the previous subsections, we use the bound $B\in \N$ such that $\frac{1}{2B^2}\cong E$, where $E$ denotes our estimate of the accuracy of our approximate RUR either from (\ref{errorbound}) or from  (\ref{errorbound2}). Thus, we can define
$$
B:=\left\lceil (2E)^{-1/2}\right\rceil.
$$ 

For efficient computation of the rational number reconstruction, we can use the extended Euclidean algorithm or, equivalently, 
continued fractions (e.g., see \cite{vzGathenGerhard1999,WangPan2003,PanWang2004,Lichtblau2005,Steffy2010}), or use LLL basis reduction as in~\cite{BrightStorjohann2011}. The computed rational polynomials with bounded denominators~we~denote~by 
\begin{eqnarray}\label{hatRUR}\hat{q}(T)\text{ and }\hat{\vv}(T)=(\hat{v}_1(T), \ldots, \hat{v}_n(T)). 
\end{eqnarray}

\begin{remark} Theorem \ref{ratrec} does not guarantee existence of the the pair~$(z,d)$ with the given properties, only uniqueness. In case the rational number reconstruction algorithm for some coefficient $c$ returns that there is no rational number within distance $E$ with denominators at most $B$, we will need to improve the precision $E$ (which will increase the bound $B$ on the denominator). This is done by applying further local or global Newton steps on our approximates. As described in  Theorem \ref{worstcase} below, if the bound $B$ we obtained this way is larger than an 
{\em a priori} bound, we terminate our iteration and conclude that it did 
not converge to a RUR of a~rational~component.
\end{remark}

\subsection{Termination} 

One key task is to decide whether to terminate our iterations 
or increase the accuracy of the approximate RUR as described in Sections~\ref{local}~and~\ref{global}. 

Let $\hat{q}(T)\text{ and }\hat{\vv}(T)=(\hat{v}_1(T), \ldots, \hat{v}_n(T)) $ be the rational polynomials with bounded denominators 
as computed in  Subsection \ref{rnr}. In this step we will use our original overdetermined input system $\ff=(f_1, \ldots, f_m)$ in $\Q[x_1, \ldots, x_n]$ for some $m\geq n$ with $\I=\langle f_1, \ldots, f_m\rangle$. 

First, we reduce symbolically each $f_i$ by the Gr\"obner~basis 
$$\{\hat{q}(T), x_1-\hat{v}_1(T), \ldots, x_n-\hat{v}_n(T)\},
$$ 
or equivalently we compute $f_i(\hat{\vv}(T)) \mod q(T)$. If they all reduce to zero, we return (\ref{hatRUR}) as the  exact RUR of $\I$. 

If not all $f_i(\hat{\vv}(T)) \mod q(T)$ are zero, then either the accuracy of our approximate RUR was too small to ``click on'' the exact RUR in Subsection~\ref{rnr},  or the iteration does not converge to an RUR of a rational component of~$\I$. To decide which case we are in, we will use {\em a priori} upper bounds on the heights of the  coefficients of an RUR of a rational component~of~$\I$. 

Below, we review some of the known upper bounds that we can use in our estimates. First, we define heights of polynomials over $\Q$ in a way that we can utilize symbolic algorithms over $\Z$ to get bounds.  

\begin{definition}
Let $p(T)=T^d+a_{d-1}T^{d-1}+\cdots+a_0\in \Q[T]$ where each 
$a_i=z_i/d_i$ for $z_i\in \Z$ and $d_i\in \N$. Let $P(T)\in \Z[T]$ be any intereger polynomial that is an integer multiple of $p(T)$, for example we can choose  
$$P(T)=b_dT^d+b_{d-1}T^{d-1}+\cdots+b_0 :=\left( \prod_{i=0}^{d-1}d_i\right)p(T)\in \Z[T].
$$
Then the {\em height of $p$} is defined as
$$ H( p )=H( P )= \frac{\max \{|b_i|\;:\;i=0, \ldots , d\}}{\gcd(b_i\;:\; i=0, \ldots , d)}
$$
which is clearly independent of  representation of $p(T)$ in $\Z[T]$. Finally, we define the {\em logarithmic height of $p$} as
$$
h(p):=\log H(p).
$$
Note that if $\gcd(z_i, d_i)=1$ for all $i$ then 
\begin{equation}\label{denombound} H( p) \geq \max_i \{|z_i|, d_i\}
\end{equation}
so we can use the height to bound the magnitude of the numerators and denominators  appearing in the  coefficients of our polynomials. 
\end{definition}

The best known upper bounds for the logarithmic heights of the polynomials in the RUR of $\I$ are as follows \cite{DahanSchost2004}. Assume that the input polynomials $f_1, \ldots, f_m$ have degree at most $D$ and logarithmic height at most~$h$. First, we give a bound for the logarithmic bound of the primitive element $u= \lambda_1x_1+\cdots + \lambda_nx_n$ using \cite[Lemma 2.1]{Rou99}:
$$
h(\lambda_1x_1+\cdots + \lambda_nx_n)\leq 2(n-1)\log(n\delta),
$$ where $\delta$ is the number of roots in $V(\I)$. To use \cite{DahanSchost2004}, which assumes that $x_1$ is a primitive element, we take $(f_1, \ldots, f_m, \lambda_1x_1+\cdots + \lambda_nx_n - T)$ as our input, with a logarithmic height upper bound 
$$
h':=2(n-1)\log(n\delta)+h.
$$
Then, using an arithmetic B\'ezout
theorem in \cite[Lemma 2.7]{KrPaSo2001} and \cite[Theorem 1]{DahanSchost2004}, the logarithmic heights of the polynomials  $q(T), v_1(T), \ldots, v_n(T)$ in an RUR of $\I$ are bounded by 
 \begin{equation}\label{Hbound}
h(q), h(v_i)\leq 6n^3 h' D^{n+1}\leq 12 n^4h D^{n+1}\log(n\delta) \quad i=1, \dots n.
\end{equation}

To get a bound for the height of the polynomials in an RUR for a rational component of $\I$, we use Gelfund's inequality for the height of a polynomial divisor of an integer polynomial \cite[Proposition B.7.3]{HindSil2000} (other bounds can  also be used,  a survey of the known bounds of factors in $\Z[x]$ be found in~\cite{Abbott2013}). For $P,Q\in \Z[T]$ such that $P$ is a divisor of $ Q$, we have 
$$
H( P ) \leq e^{\deg(Q)} H(Q). 
$$
This, combined with (\ref{Hbound}), gives the following upper bound for the logarithmic heights of the polynomials in an exact RUR $q^*(T), v_1^*(T), \ldots, v_n^*(T)$ of a rational component of $\I$: 
\begin{equation}
h(q^*), h(v_i^*)\leq 12\delta n^4h D^{n+1}\log(n\delta)\quad i=1, \dots n, 
\end{equation}
where $\delta$ is the number of roots in   $V(\I)$.  This yields the following bound on the heights of the polynomials in a RUR 
of a rational~component~of~$\I$: 
\begin{equation}\label{Hbound2}
H(q^*), H(v_i^*)\leq Hn\delta e^{12\delta n^4 D^{n+1}}\quad i=1, \dots n, 
\end{equation}
where $H=e^h$ is an upper bound for the height of the input polynomials~in~$\ff$.
 
Once we have an {\em a priori} bound for the heights of the polynomials in an exact RUR of a rational component of $\I$, we can use (\ref{denombound}) and check if the bound $B$ for the denominators used in the rational number reconstruction in Subsection \ref{rnr} exceeds the {\em a priori} bound from (\ref{Hbound2}). If that is the case, we conclude that the iteration did not converge to an exact RUR of a rational component of $\I$ and terminate our algorithm. Otherwise, continue to increase the accuracy of our approximation. 

We summarize this subsection in the following theorems: 

\begin{theorem}\label{bestcase} Let $\I$ be as above. Assume that $q^*(T), v_1^*(T), \ldots, v_n^*(T)$ is an exact RUR of a rational component of $\I$. Define the maximum height
$$
H^*:=max\{H(q^*), H(v_1^*), \ldots, H(v_n^*)\}.
$$
Assume that an approximate RUR, $q(T), v_1(T), \ldots, v_n(T)$, satisfies 
\begin{equation}\label{2norm}
\|q(T)-q^*(T)\|_2, \|v_i(T)-v_i^*(T)\|_2\leq E < \frac{1}{2(H^*)^2} \quad i=1, \ldots, n,
\end{equation}
for some $E>0$, and let $\hat{q}(T), \hat{v}_1(T), \ldots, \hat{v}_n(T)$   obtained  via rational number reconstruction on the coefficients of $q(T), v_1(T), \ldots, v_n(T)$ using the bound $B:=\lceil(2E)^{-1/2}\rceil> H^*$. 
Then 
$$
\hat{q}(T)=q^*(T), \hat{v}_1(T)=v_1^*(T), \ldots, \hat{v}_n(T)=v_n^*(T).
$$
\end{theorem}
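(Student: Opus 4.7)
The plan is to apply Theorem \ref{ratrec} coefficient-by-coefficient to each coefficient of the polynomials $q(T), v_1(T), \ldots, v_n(T)$ and conclude, by uniqueness, that the reconstructed pair is precisely the numerator/denominator pair of the corresponding coefficient of the exact RUR.

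First I would fix an arbitrary coefficient $c^*$ of $q^*(T)$ or $v_i^*(T)$. By the height definition together with (\ref{denombound}), we can write $c^* = z^*/d^*$ with $\gcd(z^*, d^*)=1$ and both $|z^*|, d^* \le H^*$. Let $c$ denote the corresponding coefficient of the approximate polynomial $q(T)$ or $v_i(T)$; since the Euclidean norm on a coefficient vector dominates the absolute value of every individual entry, the hypothesis (\ref{2norm}) gives $|c - c^*| \le E$.

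The main step is verifying that the hypothesis of Theorem \ref{ratrec} is met with this choice of $B$. From $E < \frac{1}{2(H^*)^2}$ we obtain $H^* < (2E)^{-1/2} \le B$, so in particular $1 \le d^* \le H^* < B$, securing the denominator bound. For the distance bound, the relation $B = \lceil (2E)^{-1/2}\rceil > H^*$ combined with $E < \frac{1}{2(H^*)^2}$ ensures $|c - c^*| \le E < \frac{1}{2B^2}$ once $B$ is chosen to keep the strict inequality (which is why the hypothesis is strict and $B$ is taken strictly larger than $H^*$). Thus the pair $(z^*, d^*)$ satisfies the conditions of Theorem \ref{ratrec} for the number $c$.

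By the uniqueness clause of Theorem \ref{ratrec}, the algorithm must return exactly the pair $(z^*, d^*)$, so the reconstructed rational equals $c^*$. Applying this argument to every coefficient of $q(T)$ and each $v_i(T)$ yields $\hat{q}(T) = q^*(T)$ and $\hat{v}_i(T) = v_i^*(T)$ for all $i$, as claimed. The only delicate point is confirming the strict inequality $E < \frac{1}{2B^2}$ in the presence of the ceiling in the definition of $B$; this is handled by the strict hypothesis $E < \frac{1}{2(H^*)^2}$ together with the strict inequality $B > H^*$, which provides the required slack.
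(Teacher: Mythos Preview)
Your overall strategy—applying Theorem~\ref{ratrec} coefficient by coefficient and invoking uniqueness—is exactly the paper's approach. However, there is a concrete error in the inequality you claim between $E$ and $\tfrac{1}{2B^2}$.

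You assert $E < \tfrac{1}{2B^2}$, but the ceiling in $B := \lceil (2E)^{-1/2}\rceil$ forces the \emph{opposite} direction: since $B \ge (2E)^{-1/2}$, squaring gives $B^2 \ge \tfrac{1}{2E}$, hence $\tfrac{1}{2B^2} \le E$. Your justification (``the strict hypothesis $E < \tfrac{1}{2(H^*)^2}$ together with $B > H^*$ provides the required slack'') does not work: from $B > H^*$ you obtain $\tfrac{1}{2B^2} < \tfrac{1}{2(H^*)^2}$, and separately $E < \tfrac{1}{2(H^*)^2}$, but two quantities both bounded above by $\tfrac{1}{2(H^*)^2}$ can be ordered either way. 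A small numerical check exposes this: for $E = 0.1$ one gets $B = 3$ and $\tfrac{1}{2B^2} = \tfrac{1}{18} < E$.

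The paper does not try to push $|c - c^*|$ below $\tfrac{1}{2B^2}$. Instead it records the correct inequality $\tfrac{1}{2B^2} \le E$ and argues that whatever rational the reconstruction returns lies within $\tfrac{1}{2B^2} \le E$ of $c$, while $c^*$ (with denominator $\le H^* < B$) also lies within $E$ of $c$; uniqueness then identifies them. To repair your argument you should reverse the inequality and follow this line rather than trying to place $c^*$ inside the algorithm's $\tfrac{1}{2B^2}$-window directly.
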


 \begin{proof} Note that the coefficients of 
 $q^*(T), v_1^*(T), \ldots, v_n^*(T)$ have denominator
 at most $H^*<B$ by (\ref{denombound}). Since the 2-norm gives an upper bound for the infinity norm, we know all coefficients of $q^*(T), v_1^*(T), \ldots, v_n^*(T)$ are at most distance $E$ from the corresponding coefficient of  $q(T), v_1(T), \ldots, v_n(T)$. By Theorem~\ref{ratrec}, for each coefficient  of of  $q(T), v_1(T), \ldots, v_n(T)$, there is at most one rational number   with denominator bounded by \mbox{$B=\left\lceil (2E)^{-1/2}\right\rceil> H^*$} within the distance of  
 $$
\frac{1}{2B^2}= \frac{1}{2\left\lceil (2E)^{-1/2}\right\rceil^2}\leq \frac{1}{2\left( (2E)^{-1/2}\right)^2}= E.
$$
This proves that  the rational reconstruction must equal the exact~RUR.  \end{proof}
 
 The next theorem considers the converse statement.

\begin{theorem}\label{worstcase} Let $\ff\subset \Q[x_1, \ldots, x_n]$ be as above and assume that $H$ and~$D$ are the maximum height and degree  of the polynomials in $\ff$ respectively. Also, let $\delta$ be the cardinality of the common roots of $\ff$ in $\C^n$. Assume that we have an upper bound $E$ for the accuracy of our approximate RUR $q(T), v_1(T), \ldots, v_n(T)$ either from (\ref{errorbound}) or from  (\ref{errorbound2}). Let $B:= \lceil(2E)^{-1/2}\rceil$ and assume that 
$$
B\geq  Hn\delta e^{12\delta n^4 D^{n+1}}.
$$
Let $\hat{q}(T), \hat{v}_1(T), \ldots, \hat{v}_n(T)$ be   obtained  via rational number reconstruction from the coefficients of $q(T), v_1(T), \ldots, v_n(T)$ using the bound  $B$ for the denominators. If
$$
\ff(\hat{\vv}(T))\not\equiv 0 \mod \hat{q}(T)
$$
then there is no exact RUR of a rational component of $\I$ within the distance of $E$ from $q(T), v_1(T), \ldots, v_n(T)$.
\end{theorem}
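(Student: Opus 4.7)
The natural approach is to prove the contrapositive: assume there exists an exact RUR $q^*(T), v_1^*(T), \ldots, v_n^*(T)$ of a rational component of $\I$ whose coefficient vectors lie within $E$ of those of the approximate RUR $q(T), v_1(T), \ldots, v_n(T)$, and deduce that $\ff(\hat\vv(T))\equiv 0 \bmod \hat q(T)$. This reduces the proposition to an application of the a priori height bound together with the uniqueness half of rational number reconstruction, almost identically to Theorem~\ref{bestcase}.

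First, I would invoke the bound (\ref{Hbound2}), which says that the maximum height $H^{*}:=\max\{H(q^*),H(v_1^*),\ldots,H(v_n^*)\}$ of \emph{any} exact RUR of a rational component of $\I$ is at most $Hn\delta\, e^{12\delta n^4 D^{n+1}}$. By the hypothesis on $B$, this gives $H^{*}\le B$, so by (\ref{denombound}) all the numerators and denominators of the coefficients of $q^*$ and the $v_i^*$ are bounded in absolute value by $B$. In particular, each coefficient of $q^*$ and each $v_i^*$ is a rational number whose denominator does not exceed $B$.

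Next, I would combine this with the closeness assumption $\|q-q^*\|_2,\|v_i-v_i^*\|_2\le E$. Since the Euclidean norm dominates the sup norm on the coefficient vector, every coefficient $c^*$ of $q^*$ or $v_i^*$ is within distance $E$ of the corresponding coefficient $c$ of the approximate RUR. With $B=\lceil (2E)^{-1/2}\rceil$, so that $1/(2B^2)\ge E$ (up to the ceiling), Theorem~\ref{ratrec} asserts that there is \emph{at most one} rational number with denominator at most $B$ within distance $1/(2B^2)$ of $c$. Since both $c^*$ and $\hat c$ (the output of rational reconstruction on $c$ with bound $B$) satisfy this property, uniqueness forces $\hat c = c^*$. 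Applied coefficient by coefficient, this yields $\hat q(T)=q^*(T)$ and $\hat v_i(T)=v_i^*(T)$ for $i=1,\ldots,n$.

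Finally, since $q^*, v_1^*,\ldots,v_n^*$ is by assumption an RUR of a rational component of $\I$, its defining property is precisely that $f_i(v_1^*(T),\ldots,v_n^*(T))\equiv 0 \bmod q^*(T)$ for every $i=1,\ldots,m$. Substituting the equalities just established, we get $\ff(\hat\vv(T))\equiv 0 \bmod \hat q(T)$, contradicting the hypothesis of the theorem, and completing the contrapositive. The only delicate point I expect is the bookkeeping between $E$, $B$, and $1/(2B^2)$ in the ceiling/floor, which mirrors the analogous step in the proof of Theorem~\ref{bestcase}; everything else is a direct chaining of (\ref{Hbound2}), (\ref{denombound}), and Theorem~\ref{ratrec}.
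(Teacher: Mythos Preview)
Your proof is correct and follows essentially the same approach as the paper: argue by contradiction (contrapositive), use the {\em a priori} height bound (\ref{Hbound2}) to show any nearby exact RUR has coefficients with denominators at most $B$, and then invoke the uniqueness in Theorem~\ref{ratrec} to conclude that rational reconstruction must recover it exactly. The paper's proof is simply a more compressed version of what you wrote.
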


\begin{proof} If there was an exact RUR of a rational component of $\I$ within $E$ from $q(T), v_1(T), \ldots, v_n(T)$, the heights of its coefficients would be bounded by $B\geq Hn\delta e^{12\delta n^4 D^{n+1}}$ as in (\ref{Hbound2}). The rational number reconstruction algorithm would have found this exact RUR, 
which is a contradiction.
\end{proof}

Finally, we give the number of iteration needed (asymptotically) in the ``best case" and in the ``worst case". 

\begin{theorem} Let $\ff \in\Q[x_1, \ldots, x_n]^m $ and $\I=\langle \ff\rangle$ be as above.
\begin{enumerate}
\item Assume that  $q^*(T), v_1^*(T), \ldots, v_n^*(T)$ is an exact RUR for a component of $\I$ and 
assume that  $q^{(0)}(T), v_1^{(0)}(T), \ldots, v_n^{(0)}(T)$ is  an initial approximate RUR which quadratically converges to $q^*(T), v_1^*(T), \ldots, v_n^*(T)$ either using local Newton iteration as in Subsection \ref{local} or global Newton iteration as in Subsection \ref{global}. Then the number  of iterations needed to find  $q^*(T), v_1^*(T), \ldots, v_n^*(T)$ is asymptotically bounded by 
$$
{\mathcal O} ( \log(d)\log\log (H^*E_0))
$$
where $
H^*=max\{H(q^*), H(v_1^*), \ldots, H(v_n^*)\}
$
is the height of the output, $d=\deg_T(q)$, and $E_0$ is an upper bound on the Euclidean distance of $q^{(0)}(T), \vv^{(0)}(T)$ from the output $q^*(T), \vv^*(T)$. 

\item Assume that $q^{(0)}(T), v_1^{(0)}(T), \ldots, v_n^{(0)}(T)$ is  an initial approximate RUR which quadratically converges to the polynomials $q^*(T), v_1^*(T), \ldots, v_n^*(T)$, but these polynomials have  irrational coefficients, i.e. they do not form an RUR of a rational component of $\I$. In this case we need 
$$
{\mathcal O}  (n\log(d\delta n D) \log\log(HE_0))
$$ iterations to conclude that our  iteration did not converge to an exact RUR of a rational component of $\I$. Here $H$ and $D$ are the maximum of the heights and degrees of the polynomials in $\ff$, respectively,  $\delta$ is the number of roots in $V(\I)$, and $d, E_0$ are as above in 1.
\end{enumerate}
\end{theorem}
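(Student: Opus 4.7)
The plan is to combine the quadratic convergence estimates (\ref{errorbound}) and (\ref{errorbound2}) for the local and global Newton methods with the two certification theorems Theorem~\ref{bestcase} and Theorem~\ref{worstcase} just established. The governing principle in both parts is that after $k$ iterations the Euclidean distance $E^{(k)}$ of the current approximate RUR to its limit satisfies $E^{(k)} \leq E_0 \cdot d^{O(1)} \cdot (1/2)^{2^k-1}$, where the $d^{O(1)}$ factor absorbs the loss of precision coming from Lagrange interpolation and the Chebyshev-to-monomial basis conversion in Subsection~\ref{local} (or, for the global method, from the interpretation of $\Phi$ as a map on coefficient vectors in $\Q^{(n+1)d}$). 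The iteration count is obtained by identifying the target precision and inverting this doubly-exponential decay.

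For Part 1, the target precision comes directly from Theorem~\ref{bestcase}: it suffices to reach $E^{(k)} < 1/(2(H^*)^2)$, since then the denominator bound $B = \lceil(2E^{(k)})^{-1/2}\rceil$ exceeds $H^*$ and rational number reconstruction returns $q^*(T), \vv^*(T)$ exactly. Substituting the quadratic bound on $E^{(k)}$, taking $\log_2$ twice, and solving for $k$ yields the claimed $\mathcal O(\log(d)\log\log(H^* E_0))$ iteration count.

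For Part 2, the limit $q^*(T), \vv^*(T)$ has at least one irrational coefficient, so no rational polynomial with bounded height can lie within distance $E^{(k)}$ of the iterate once $E^{(k)}$ is small enough. The certification trigger here is Theorem~\ref{worstcase}, which applies as soon as $B = \lceil(2E^{(k)})^{-1/2}\rceil \geq Hn\delta\, \exp(12\delta n^4 D^{n+1})$; inverting the quadratic convergence estimate then forces $2^k$ to dominate $24\delta n^4 D^{n+1} + O(\log(dHE_0))$, yielding the $\mathcal O(n\log(d\delta n D)\log\log(HE_0))$ iteration count. At each iteration satisfying this lower bound on $B$, the residue test $\ff(\hat{\vv}) \not\equiv 0 \bmod \hat{q}$ is guaranteed to hold by Theorem~\ref{worstcase}, so the algorithm correctly terminates with a certified non-convergence conclusion.

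The main obstacle is the careful bookkeeping of the $d$-dependent constants in each iteration, since one must distinguish what accumulates inside the $\log\log$ from what contributes additively as $\log d$ or multiplicatively through the $n\log(d\delta n D)$ prefactor. A secondary technical point is verifying that the quadratic convergence hypothesis is preserved across all iterations, i.e. Assumption~\ref{assu3} for the global method and the $\alpha$-theoretic certification of the approximate roots $\bfz^{(0)}_i$ for the local method, so that the $(1/2)^{2^k-1}$ decay remains valid uniformly in $k$.
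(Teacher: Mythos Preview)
Your proposal is correct and follows essentially the same route as the paper: identify the target accuracy from Theorem~\ref{bestcase} (respectively Theorem~\ref{worstcase}), substitute the quadratic error bound, and invert for $k$. One minor correction: the Chebyshev-to-monomial conversion in Subsection~\ref{local} has condition number $d2^{d-1}$, not $d^{O(1)}$, so the governing estimate is $E^{(k)} \leq E_0\, d^2 (1/2)^{2^k-1-d}$ as in (\ref{errorbound}); this exponential-in-$d$ factor is exactly what produces the additive $\log d$ contribution (and hence the stated $\mathcal O(\log d \cdot \log\log(H^*E_0))$ product bound) after the double logarithm.
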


\begin{proof} 1. By Theorem \ref{bestcase}  to successfully terminate the algorithm with the exact RUR we need to achieve inaccuracy of $\frac{1}{2(H^*)^2}$. We will use the error bound of the local Newton method  to increase the accuracy of the RUR given in (\ref{errorbound}) because that is a weaker bound than the one we got for the global Newton method in (\ref{errorbound2}), so the bound we get for the number of iterations $k$ in (\ref{errorbound}) will also give an upper bound for the iterations that we need using the global Newton method. 
Thus using  (\ref{errorbound})  we need that 
$$
d^2E_0\left(\frac{1}{2}\right)^{2^k-1-d}\leq \frac{1}{2(H^*)^2},
$$ which is satisfied if 
$$
k\geq c_1 \log (d)\log\log(E_0H^*)
$$ for some constant $c_1\leq 2$.\\
2. By Theorem \ref{worstcase}  to terminate the algorithm  in the worst case we need to achieve accuracy
$$
\frac{1}{2(Hn\delta e^{12\delta n^4 D^{n+1}})^2}.
$$
Using the same argument as for the first claim for the error bound after $k$ iterations of either the local or the global Newton method, using  (\ref{errorbound})  we need that 
$$
d^2E_0\left(\frac{1}{2}\right)^{2^k-1-d}\leq \frac{1}{2(Hn\delta e^{12\delta n^4 D^{n+1}})^2},
$$ which is satisfied if 
$$
k\geq c_2 (n+1)\log(d\delta n D) \log\log(E_0H)
$$
for some constant $c_2\leq 18$.
\end{proof}

\section{Certification of isosingular points}\label{Sec:MultipleRoots}

As summarized in the Introduction, we will use 
isosingular deflation \cite{Hauenstein-Wampler}
to reduce to the case of certifying simple roots 
to overdetermined systems which was 
discussed in Section~\ref{Sec:Over}.
Due to this reduction, we can extend this approach to 
all points which can be regularized by isosingular deflation,
called {\em isosingular points}.
Since every isolated multiple root is an isosingular point, this method
applies to multiple roots.
However, isosingular points need not be isolated
as demonstrated by the origin with respect to the
Whitney umbrella $x^2 - y^2 z = 0$.

\subsection{Isosingular deflation}\label{SSec:Iso}

Given a system $\g=(g_1,\ldots,g_m)\in \Q[x_1,\ldots,x_n]$
and a root $z$ of $\g = 0$,
isosingular deflation 
is an iterative regularization process.
For the current purposes, we will only consider
the deflation operator $\sD_{\det}$ defined as follows.  
Let $d = \dnull(\g,z) = \dim~\Null~J\g(z)$
be the dimension of the null space of the Jacobian
matrix $J\g$ evaluated at $z$.
Let $\ell = \binom{n}{n-d+1}\cdot\binom{m}{n-d+1}$
and $\{\sigma_1,\dots,\sigma_\ell\}$ be the index
set of all $(n-d+1)\times(n-d+1)$ submatrices 
of an $m\times n$ matrix.  If $d = \max\{0,n-m\}$, 
then $\ell=0$ in which case we know that 
$z$ is a smooth point on an irreducible solution component of 
dimension~$d$.  Define $\sD_{\det}(\g,z) = (\g_{\det},z_{\det})$
where $z_{\det} = z$ and 
$$\g_{\det} = \left[\begin{array}{c}
\g \\
\det J_{\sigma_1} \g \\
\vdots \\
\det J_{\sigma_\ell} \g
\end{array}\right].$$
The matrix $J_\sigma \g$ is the submatrix of
the Jacobian $J\g$ indexed by $\sigma$.
Clearly, $\g_{\det}$ consists of polynomials in $\Q[x_1,\dots,x_n]$.  

Since the deflation operator $\sD_{\det}$ will be 
repeatedly applied, we write $\sD_{\det}^k(\g,z)$ to mean $k$
successive iterations with $\sD_{\det}^0(\g,z) = (\g,z)$.
This naturally leads to an associated nonincreasing sequence
of nonnegative integers, called
the {\em deflation sequence}, namely $\{d_r(\g,z)\}_{r=0}^\infty$
where
$$d_k(\g,z) = \dnull(\sD_{\det}^k(\g,z)) \hbox{~~~~for $k\geq0$}.$$
Since this  nonnegative sequence is decreasing, it must have a limit and it must stabilize to that limit.
That is, there are integers $d_\infty(\g,z)\geq0$ and $s\geq0$ 
so that $d_t(\g,z) = d_\infty(\g,z)$ for all $t\geq s$.
When $z$ is isolated, $s$ is bounded above by the depth as well as multiplicity
\cite{DZ05,Hauenstein-Wampler,LVZ06}.
The limit $d_{\infty}(\g,z)$ is called the {\em isosingular local dimension}
of $z$ with respect to $\g$.  The {\em isosingular points} are
those for which their isosingular local dimension is zero
so that, after finitely many iterations, 
isosingular deflation has regularized the root, i.e., constructed
a polynomial system for which the point is a regular root.
Clearly, such a system must consist of at least $n$ polynomials, but
will typically be overdetermined.  

Since the resulting polynomial systems have rational coefficients,
it immediately follows that every point in a zero-dimensional rational component 
must have the same deflation sequence.  This can be used
to partition the set of points under consideration 
into subsets and run the certification procedure
described in Section~\ref{Sec:Over} independently on each subset.  

The construction of the deflation sequence and the resulting regularized
system is an exact process that depends upon $z$.  In our situation
where $z$ is only known approximately, we use the numerical approximations
to compute exact numbers, namely the nonnegative integers 
arising as the dimensions of various linear subspaces which form the deflation sequence.  

One drawback with $\sD_{\det}$ is the number of minors used in each iteration, 
namely $\ell = \binom{n}{n-d+1}\cdot\binom{m}{n-d+1}$.
Since the codimension of the set of $m\times n$ matrices of rank $n-d$
is $c = d(m+d-n)$, we will adjust $\sD_{\det}$ to use exactly $c$ minors as follows.
Since $d = \dnull(\g,z)$, we can select an invertible 
$(n-d)\times(n-d)$ submatrix of $J\g(z)$.  
Rather than using
all of $\{\sigma_1,\dots,\sigma_\ell\}$, we only use the 
$c$ many which contain our selected invertible submatrix.  
In particular, with this setup, 
the tangent space of these $c$ many minors
is equal to the tangent space of all $\ell$ minors at $z$.

With this specialized construction, one now needs to be cautious
that two points with the same deflation sequence can fail to be
regularized by the system constructed by the other.  However,
all points in the same zero-dimensional rational component
will still be regularized simultaneously.
In particular, by comparing ranks of various submatrices, 
one may be able to produce a finer partition of the points
under consideration before independently certifying each collection.


\begin{example}\label{Ex:Whitney}
As a demonstration, consider the Whitney umbrella 
defined by $\g(x,y,z) = x^2 - y^2z$.  
Following \cite[Ex.~5.12]{Hauenstein-Wampler}, the deflation sequence
for the origin is $\{3,2,0,0,\dots\}$ showing that the origin is 
not isolated but is an isosingular point.  In particular, it
takes two iterations to construct a polynomial system for which the origin is a regular root.  Since $J\g(0)$ is identically zero, the first iteration
appends all partial derivatives,~say
$$\g'(x,y,z) = \left[\begin{array}{c} \g(x,y,z) \\ x \\ yz \\ y^2 \end{array}\right].$$
Since $J\g'(0)$ has rank 1, the original formulation of $\sD_{\det}$ 
will append $18$ $2\times2$~minors of $Jg'$.  
However, with our modification, we only need to add the $6$ minors
which arise by submatrices that, in this case, 
include the unique nonzero element of $Jg'(0)$.
For this example, it is easy to verify that the ideal of the 
resulting regularizing polynomial system is equal to $\langle x,y,z\rangle$.
\end{example}

\subsection{Certification}\label{SSec:SingCert}

Given $\ff=(f_1, \ldots, f_m)\in \Q[x_1, \ldots, x_n]$ 
and a subset $V\subset V(\ff)$ consisting of isosingular points,
the process for certification proceeds as follows.  

\begin{description}
\item[1. Deflation sequences.]  Compute the deflation sequences 
for each of the points in $V$.  If each point is an isosingular
point, then isosingular deflation will terminate and produce
a regularized system for each point.  If one is not an isosingular
point, one can apply the tests developed in \cite[Section~6]{Hauenstein-Wampler} 
to determine that the sequence has stabilized with the point having
a positive isosingular local dimension.  Remove all such points from $V$
and partition the remaining points based on their deflation sequences
and common regularizing polynomial systems, say $V_1,\dots,V_k$.  

\item[2. Certify each $V_i$.] Associated with each $V_i$ is a
polynomial system $\ff^{(i)}$ having rational coefficients that must
be either well-constrained or overdetermined.  If it is well-constrained,
simply apply standard $\alpha$-theoretic techniques for certification.
If overdetermined, use the approach presented in Section~\ref{Sec:Over}
for certification.  

\end{description}

Successfully completing the certification proves that the points
under consideration are indeed isosingular points of $\ff$,
i.e., the isosingular local dimension is zero.  
However, as currently formulated, this does not yield any information 
about the local dimension of the points, e.g., 
deciding if the point is isolated or not.  Furthermore,
even if one knows that a given point is isolated, this approach 
currently does not yield information about its multiplicity.
Both of these are topics that will be addressed in future~work.

\section{Examples}\label{Sec:Ex}

\subsection{An illustrative example}\label{SSec:Illus}

To demonstrate the approach, consider the polynomial system
$$\g(x_1,x_2,x_3) = \left[\begin{array}{c} x_1^2 + x_2^2 - 1 \\ 8x_1 - 16x_2^2 + 17 \\ x_1 - x_2^2 - x_3 - 1
\end{array}\right].$$
It is easy to verify that $\g$ has two roots of multiplicity $2$.  
Thus, after appending $\det J\g$, we are interested in 
the overdetermined polynomial system 
$$\ff(x_1,x_2,x_3) = \left[\begin{array}{c} \g(x_1,x_2,x_3) \\ 64x_1x_2 + 16x_2
\end{array}\right]$$
which has two regular roots.  A randomization of $\ff$
consists of $3$ quadratics which has $4$ regular solutions,
two of which can be  shown to not correspond to roots of $\ff$
via \cite[Section 3]{Hauenstein-Sotille}.
We start with the following numerical approximations for the
$d = 2$ points of interest:
$$\bfz_1 = (-0.250,0.968,-2.188) \hbox{~~and~~} \bfz_2 = (-0.250,-0.968,-2.188)$$
with error bound $\varepsilon = 0.002$.  From these numerical approximations, we
see that we can take the primitive element to be $u = x_2$.  

Using exact arithmetic, the initial RUR corresponding to this setup is
$$q(T) = T^2 - 14641/15625, ~~v_1(T) = -1/4, ~~v_2(T) = T, ~~v_3(T) = -547/250.$$
At $k = 0$ with $E = \varepsilon d^2 2^{1+d-2^k}$, we have $B = \left\lceil (2E)^{-1/2}\right\rceil = 2$.  Since $1/4$ can not be approximated by a rational
number with denominator bounded by $B = 2$ with an error of at most 
$(2B^2)^{-1} = 1/8$, we perform a Newton iteration on each $\bfz_i$.
At $k = 1$, using a denominator bound of $B = 16$ with error tolerance $1/512$,
we obtain
$$q(T) = T^2 - 15/16, ~~v_1(T) = -1/4, ~~v_2(T) = T, ~~v_3(T) = -35/16.$$
Since
$$\ff(-1/4,T,-35/16) = \left[\begin{array}{c} 
T^2 - 15/16 \\ 2(T^2-15/16) \\ -(T^2-15/16) \\ 0
\end{array}\right] = 0 ~~ \mod q(T)$$
we have proven that $\ff$ and $\g$ have (at least) $2$ roots which form
a rational component.
The corresponding well-constrained system from this RUR is
$$\left[\begin{array}{c} x_1 + 1/4 \\ x_2^2 - 15/16 \\ x_3 + 35/16 \end{array}\right].$$

\subsection{A well-constrained system with regular and multiple roots}\label{SSec:Caprasse}

A common benchmark system is the Caprasse system which has $24$ regular
roots and $8$ roots of multiplicity four \cite{MK99}, namely 
$$\g = 
{\small
\left[\begin{array}{c}
x_1^3 x_3-4 x_1^2 x_2 x_4-4 x_1 x_2^2 x_3-2 x_2^3 x_4-4 x_1^2-4 x_1 x_3+10 x_2^2+10 x_2 x_4-2 \\
x_1 x_3^3-4 x_1 x_3 x_4^2-4 x_2 x_3^2 x_4-2 x_2 x_4^3-4 x_1 x_3+10 x_2 x_4-4 x_3^2+10 x_4^2-2 \\
2 x_1 x_2 x_4+x_2^2 x_3-2 x_1-x_3 \\
x_1 x_4^2+2 x_2 x_3 x_4-x_1-2 x_3 
\end{array}\right].}$$
Since the system is well-constrained, numerical approximations
for the $24$ regular roots can be certified using standard $\alpha$-theory.
Here, we consider certifying the multiple roots.
At each of these multiple roots, $J\g$ has rank~$2$
with the lower right $2\times 2$ block having full rank.
Thus, we consider the 
system $\ff$ constructed by appending the four 
$3\times 3$ minors of $J\g$ containing the lower right block to $\g$.

From the numerical approximations of the $8$ points $\bfz_i$
that we computed using {\tt Bertini} \cite{Bertini}, we see
that $u = x_1 - x_2 + 3x_3 - 3x_4$ is a primitive element.
Starting the numerical approximations correct to $10$ digits, we 
obtain the following RUR after one Newton iteration:
$${\small \begin{array}{rcl}
q(T) &=& (T^2 + 4/3)(T^2 + 12)(T^2 - 16 T + 76)(T^2 + 16 T + 76) \\
v_1(T) &=& -(4107 T^7 - 347060 T^5 + 15954064 T^3 + 361834048 T)/339935232 \\
v_2(T) &=& -(6999 T^7 - 672196 T^5 + 32397008 T^3 + 384342848 T)/679870464 \\
v_3(T) &=& (1851 T^7 - 169876 T^5 + 8058512 T^3 + 181018688 T)/169967616 \\
v_4(T) &=& (6999 T^7 - 672196 T^5 + 32397008 T^3 + 384342848 T)/679870464
\end{array}}$$
This RUR shows that the $8$ roots arise from $4$ rational components,
each of degree $2$, with splitting field $\Q[\sqrt{3}]$.

\subsection{Two cyclic system}\label{SSec:Cyclic}

A common family of benchmark systems are the cyclic-$n$
roots \cite{CyclicN}.  Here, we demonstrate
using the approach for problems related to $n = 4$ and $n = 9$.

For $n\geq2$, the cyclic-$n$ system is
$$\ff_n = \left[\begin{array}{cc}
\sum_{j=1}^n \prod_{k=1}^{D} x_{j+k} & \hbox{~~for $D = 1,\dots,n-1$} \\
\prod_{k=1}^n x_k - 1 
\end{array}\right]$$
where $x_{n+\ell} = x_\ell$ for all $\ell = 1,\dots,n$.

The solutions of $\ff_4 = 0$ lie on two irreducible curves,
with $8$ embedded points, which are isosingular points.
We deflate these points simultaneously by appending in
the four $3\times3$ minors of $J\ff_4$ containing
the first and last rows, and second and third columns. 
By using numerical approximations computed by {\tt Bertini},
we see that we can use the primitive element
\mbox{$u = x_1 + 2x_2 - x_3 + 3x_4$}.  This yields
the RUR:
$$
\begin{array}{rcl}
q(T) &=& (T-1)(T+1)(T-3)(T+3)(T^2+1)(T^2+9) \\
v_1(T) &=& (-T^5 + 121T)/120 \\
v_2(T) &=& (-T^5+61T)/60 \\
v_3(T) &=& (T^5-121T)/120 \\
v_4(T) &=& (T^5-61T)/60 
\end{array}
$$

Now, for $n = 9$, we consider the overdetermined system
$$\ff(x) = \left[\begin{array}{c} \ff_9(x) \\ x_1 - x_4 \\ x_1 - x_7 \end{array}\right]$$
motivated by Example 9 of \cite[Section~7]{DZ05}.  
The degree of the ideal generated by $\ff$ is $162$ 
with {\tt Bertini} computing $54$ regular points and $54$ double points.
The following uses the primitive element
$u = x_1 + 2 x_2 - x_3 + 2 x_5 + x_6 - x_8$.

For the $54$ regular points, we compute
$$
\begin{array}{rcl}
q(T) &=& (T^2+T-101)(T^4-T^3+102 T^2+101 T+10201) \\
& & (T^2+19 T+79)(T^4-19 T^3+282 T^2-1501 T+6241) \\
& & (T^2-17 T+61)(T^4+17 T^3+228 T^2+1037 T+3721) \\
& & (T^{12}-2356 T^9+5057697 T^6-1161599884 T^3+243087455521) \\
& & (T^{12}-304 T^9+1122717 T^6+313211504 T^3+1061520150601) \\
& & (T^{12}+1802 T^9+3020223 T^6+409019762 T^3+51520374361). 
\end{array}
$$

For the $54$ double points, by comparing ranks of $8\times8$ submatrices
of $J\ff$, we are able to partition into $3$ subsets of size $18$.  
For each of these subsets, we compute the three as
$$
\begin{array}{rcl}
q(T) &=& (T^2-11 T+19) (T^4+11 T^3+102 T^2+209 T+361)\\
& & (T^{12}+704 T^9+488757 T^6+4828736 T^3+47045881)
\\
\\
q(T) &=& (T^2+13 T+31) (T^4-13 T^3+138 T^2-403 T+961) \\
& & (T^{12}-988 T^9+946353 T^6-29433508 T^3+887503681) 
\\
\\
q(T) &=& (T^2+T-11)(T^4-T^3+12 T^2+11 T+121)\\
& & (T^{12}-34 T^9+2487 T^6+45254 T^3+1771561).
\end{array}
$$

\section*{Acknowledgements}
We would like to thank Eric Schost and Teresa Krick for helpful discussions on {\em a priori} upper bounds for the heights of RUR's, as well as Hoon Hong for his helpful suggestions.

\section*{References}
\bibliographystyle{elsarticle-num}

\begin{thebibliography}{10}

\bibitem{Abbott2013}
J.~Abbott.
\newblock Bounds on factors in {$\Bbb{Z}[x]$}.
\newblock {\em J. Symbolic Comput.}, 50:532--563, 2013.

\bibitem{AllamigeonGMW14}
X.~Allamigeon, S.~Gaubert, V.~Magron, and B.~Werner.
\newblock Formal proofs for nonlinear optimization.
\newblock {\tt arXiv}:1404.7282, 2014.

\bibitem{Bertini}
D.J. Bates, J.D. Hauenstein, A.J. Sommese, and C.W. Wampler.
\newblock Bertini: Software for numerical algebraic geometry.
\newblock Available at {\tt bertini.nd.edu}.

\bibitem{BHSW13}
D.J. Bates, J.D. Hauenstein, A.J. Sommese, and C.W. Wampler.
\newblock {\em Numerically solving polynomial systems with {B}ertini},
  volume~25 of {\em Software, Environments, and Tools}.
\newblock Society for Industrial and Applied Mathematics (SIAM), Philadelphia,
  PA, 2013.

\bibitem{BeltLey2012}
C.~Beltr{\'a}n and A.~Leykin.
\newblock Certified numerical homotopy tracking.
\newblock {\em Exp. Math.}, 21(1):69--83, 2012.

\bibitem{BeltLey2013}
C.~Beltr{\'a}n and A.~Leykin.
\newblock Robust certified numerical homotopy tracking.
\newblock {\em Found. Comput. Math.}, 13(2):253--295, 2013.

\bibitem{CyclicN}
G.~Bj{\"o}rck.
\newblock Functions of modulus {$1$} on {$Z_n$} whose {F}ourier transforms have
  constant modulus, and ``cyclic {$n$}-roots''.
\newblock In {\em Recent advances in {F}ourier analysis and its applications
  ({I}l {C}iocco, 1989)}, volume 315 of {\em NATO Adv. Sci. Inst. Ser. C Math.
  Phys. Sci.}, pages 131--140. Kluwer Acad. Publ., Dordrecht, 1990.

\bibitem{BlCuShSm}
L.~Blum, F.~Cucker, M.~Shub, and S.~Smale.
\newblock {\em Complexity and real computation}.
\newblock Springer-Verlag, New York, 1998.

\bibitem{BozLeeRony2013}
S.~Boz\'oki, T.-L. Lee, and L.~R\'onyai.
\newblock Seven mutually touching infinite cylinders.
\newblock {\tt arXiv}:1308.5164, 2013.

\bibitem{BrightStorjohann2011}
C.~Bright and A.~Storjohann.
\newblock Vector rational number reconstruction.
\newblock In {\em I{SSAC} 2011---{P}roceedings of the 36th {I}nternational
  {S}ymposium on {S}ymbolic and {A}lgebraic {C}omputation}, pages 51--57. ACM,
  New York, 2011.

\bibitem{Bornawell-Yap}
W.D. Brownawell and C.K. Yap.
\newblock Lower bounds for zero-dimensional projections.
\newblock In {\em I{SSAC} 2009---{P}roceedings of the 2009 {I}nternational
  {S}ymposium on {S}ymbolic and {A}lgebraic {C}omputation}, pages 79--85. ACM,
  New York, 2009.

\bibitem{Canny1990}
J.~Canny.
\newblock Generalised characteristic polynomials.
\newblock {\em J. Symbolic Comput.}, 9(3):241--250, 1990.

\bibitem{CaPaHaMo2001}
D.~Castro, L.M. Pardo, K.~H{\"a}gele, and J.E. Morais.
\newblock Kronecker's and {N}ewton's approaches to solving: a first comparison.
\newblock {\em J. Complexity}, 17(1):212--303, 2001.

\bibitem{DahanSchost2004}
X.~Dahan and {\'E}.~Schost.
\newblock Sharp estimates for triangular sets.
\newblock In {\em I{SSAC} 2004}, pages 103--110. ACM, New York, 2004.

\bibitem{DZ05}
B.H. Dayton and Z.~Zeng.
\newblock Computing the multiplicity structure in solving polynomial systems.
\newblock In {\em I{SSAC}'05}, pages 116--123. ACM, New York,
  2005.

\bibitem{DedieuShub2000}
J.P. Dedieu and M.~Shub.
\newblock Newton's method for overdetermined systems of equations.
\newblock {\em Math. Comp.}, 69(231):1099--1115, 2000.

\bibitem{FerBaiArn1999}
H.R.P. Ferguson, D.H. Bailey, and S.~Arno.
\newblock Analysis of {PSLQ}, an integer relation finding algorithm.
\newblock {\em Math. Comp.}, 68(225):351--369, 1999.

\bibitem{Gautschi84}
W.~Gautschi.
\newblock Questions of numerical condition related to polynomials.
\newblock In {\em Studies in Numerical Analysis}, volume~24 of {\em MAA Stud.
  Math.}, pages 140--177. Math. Assoc. America, Washington, DC, 1984.

\bibitem{GiesLabLee04}
M.~Giesbrecht, G.~Labahn, and W. Lee.
\newblock Symbolic-numeric sparse polynomial interpolation in chebyshev
  basis and trigonometric interpolation.
\newblock In {\em Proceedings of Computer Algebra in Scientific Computing (CASC 2004)}, 2004.

\bibitem{Grobner-free}
M.~Giusti, G.~Lecerf, and B.~Salvy.
\newblock A {G}r\"obner free alternative for polynomial system solving.
\newblock {\em J. Complexity}, 17(1):154--211, 2001.

\bibitem{Griewank-Osborne}
A.~Griewank and M.R. Osborne.
\newblock Analysis of {N}ewton's method at irregular singularities.
\newblock {\em SIAM J. Numer. Anal.}, 20(4):747--773, 1983.

\bibitem{HaHaLi2014}
J.D. Hauenstein, I.~Haywood, and A.C. Liddell, Jr.
\newblock An a posteriori certification algorithm for newton homotopies.
\newblock In {\em Proceedings of the 39th International Symposium on Symbolic
  and Algebraic Computation}, ISSAC '14, pages 248--255, New York, NY, USA,
  2014. ACM.

\bibitem{HausPanSza2014}
J.D. Hauenstein, V.Y. Pan, and A.~Szanto.
\newblock Global Newton iteration over archimedian and non-archimedian fields.
\newblock In {\em Proceedings of Computer Algebra in Scientific Computing
  (CASC 2014)}, {Lecture Notes in Computer Science}. Springer-Verlag, 2014.

\bibitem{Hauenstein-Sotille}
J.D. Hauenstein and F.~Sottile.
\newblock Algorithm 921: alpha{C}ertified: certifying solutions to polynomial
  systems.
\newblock {\em ACM Trans. Math. Software}, 38(4):28, 2012.

\bibitem{Hauenstein-Wampler}
J.D. Hauenstein and C.W. Wampler.
\newblock Isosingular sets and deflation.
\newblock {\em Found. Comput. Math.}, 13(3):371--403, 2013.

\bibitem{HindSil2000}
M.~Hindry and J.H. Silverman.
\newblock {\em Diophantine geometry}, volume 201 of {\em Graduate Texts in
  Mathematics}.
\newblock Springer-Verlag, New York, 2000.

\bibitem{Jeronimo-Perrucci}
G.~Jeronimo and D.~Perrucci.
\newblock On the minimum of a positive polynomial over the standard simplex.
\newblock {\em J. Symbolic Comput.}, 45(4):434--442, 2010.

\bibitem{JePeTs2013}
G.~Jeronimo, D.~Perrucci, and E.~Tsigaridas.
\newblock On the minimum of a polynomial function on a basic closed
  semialgebraic set and applications.
\newblock {\em SIAM J. Optim.}, 23(1):241--255, 2013.

\bibitem{Kaltofen85}
E.~Kaltofen.
\newblock Polynomial-time reductions from multivariate to bi- and univariate
  integral polynomial factorization.
\newblock {\em SIAM J. Comput.}, 14(2):469--489, 1985.

\bibitem{Kaltofen85b}
E.~Kaltofen.
\newblock Sparse {H}ensel lifting.
\newblock In {\em E{UROCAL} '85, {V}ol.\ 2 ({L}inz, 1985)}, volume 204 of {\em
  Lecture Notes in Comput. Sci.}, pages 4--17. Springer, Berlin, 1985.

\bibitem{Kaltofenetal2008}
E.~Kaltofen, B.~Li, Z.~Yang, and L.~Zhi.
\newblock Exact certification of global optimality of approximate
  factorizations via rationalizing sums-of-squares with floating point scalars.
\newblock In {\em Proceedings of the Twenty-first International Symposium on
  Symbolic and Algebraic Computation}, ISSAC '08, pages 155--164, New York, NY,
  USA, 2008. ACM.

\bibitem{Kaltofenetal2012}
E.L. Kaltofen, B.~Li, Z.~Yang, and L.~Zhi.
\newblock Exact certification in global polynomial optimization via
  sums-of-squares of rational functions with rational coefficients.
\newblock {\em J. Symb. Comput.}, 47(1):1--15, 2012.

\bibitem{KanLenLov1988}
R.~Kannan, A.K. Lenstra, and L.~Lov{\'a}sz.
\newblock Polynomial factorization and nonrandomness of bits of algebraic and
  some transcendental numbers.
\newblock {\em Math. Comp.}, 50(181):235--250, 1988.

\bibitem{KanKashOish99}
Y.~Kanzawa, M.~Kashiwagi, and S.~Oishi.
\newblock An algorithm for finding all solutions of parameter-dependent
  nonlinear equations with guaranteed accuracy.
\newblock {\em Electronics and Communications in Japan (Part III: Fundamental
  Electronic Science)}, 82(10):33--39, 1999.

\bibitem{KanOish97}
Y.~Kanzawa and S.~Oishi.
\newblock Approximate singular solutions of nonlinear equations and a numerical
  method of proving their existence.
\newblock {\em S\=urikaisekikenky\=usho K\=oky\=uroku}, (990):216--223, 1997.
\newblock Theory and application of numerical calculation in science and
  technology, II (Japanese) (Kyoto, 1996).

\bibitem{Krawczyk1969}
R.~Krawczyk.
\newblock Newton-algorithmen zur bestimmung von nullstellen mit
  fehlerschranken.
\newblock {\em Computing}, 4(3):187--201, 1969.

\bibitem{KrPaSo2001}
T.~Krick, L.M. Pardo, and M.~Sombra.
\newblock Sharp estimates for the arithmetic {N}ullstellensatz.
\newblock {\em Duke Math. J.}, 109(3):521--598, 2001.

\bibitem{LassLauRos2007}
J.~Lasserre, M.~Laurent, and P.~Rostalskl.
\newblock A unified approach to computing real and complex zeros of
  zero-dimensional ideals.
\newblock In M.~Putinar and S.~Sullivant, editors, {\em Emerging Applications
  of Algebraic Geometry}, volume 149 of {\em The IMA Volumes in Mathematics and
  its Applications}, pages 125--155. Springer New York, 2009.

\bibitem{LVZ06}
A.~Leykin, J.~Verschelde, and A.~Zhao.
\newblock Newton's method with deflation for isolated singularities of
  polynomial systems.
\newblock {\em Theoret. Comput. Sci.}, 359(1-3):111--122, 2006.

\bibitem{LVZ08}
A.~Leykin, J.~Verschelde, and A.~Zhao.
\newblock Higher-order deflation for polynomial systems with isolated singular
  solutions.
\newblock In {\em Algorithms in algebraic geometry}, volume 146 of {\em IMA
  Vol. Math. Appl.}, pages 79--97. Springer, New York, 2008.

\bibitem{LiZhi2013}
N.~Li and L.~Zhi.
\newblock Verified error bounds for isolated singular solutions of polynomial
  systems: Case of breadth one.
\newblock {\em Theor. Comput. Sci.}, 479:163--173, 2013.

\bibitem{LiZhi2014}
N.~Li and L.~Zhi.
\newblock Verified error bounds for isolated singular solutions of polynomial
  systems.
\newblock {\em SIAM Journal on Numerical Analysis}, 52(4):1623--1640, 2014.

\bibitem{Lichtblau2005}
D.~Lichtblau.
\newblock Half-{GCD} and fast rational recovery.
\newblock In {\em I{SSAC}'05}, pages 231--236. ACM, New York, 2005.

\bibitem{MantzMourr2011}
A.~Mantzaflaris and B.~Mourrain.
\newblock Deflation and certified isolation of singular zeros of polynomial
  systems.
\newblock In {\em Proceedings of the 36th International Symposium on Symbolic
  and Algebraic Computation}, ISSAC '11, pages 249--256, New York, NY, USA,
  2011. ACM.

\bibitem{MonnCor2011}
D.~Monniaux and P.~Corbineau.
\newblock On the generation of positivstellensatz witnesses in degenerate
  cases.
\newblock In M.~van Eekelen, H.~Geuvers, J.~Schmaltz, and F.~Wiedijk, editors,
  {\em Interactive Theorem Proving}, volume 6898 of {\em Lecture Notes in
  Computer Science}, pages 249--264. Springer Berlin Heidelberg, 2011.

\bibitem{Moore1977}
R.~Moore.
\newblock A test for existence of solutions to nonlinear systems.
\newblock {\em SIAM Journal on Numerical Analysis}, 14(4):611--615, 1977.

\bibitem{MK99}
S.~Moritsugu and K.~Kuriyama.
\newblock On multiple zeros of systems of algebraic equations.
\newblock In {\em Proceedings of the 1999 International Symposium on Symbolic
  and Algebraic Computation}, ISSAC '99, pages 23--30, New York, NY, USA, 1999.
  ACM.

\bibitem{Naketal2003}
Y.~Nakaya, S.~Oishi, M.~Kashiwagi, and Y.~Kanzawa.
\newblock Numerical verification of nonexistence of solutions for separable
  nonlinear equations and its application to all solutions algorithm.
\newblock {\em Electronics and Communications in Japan (Part III: Fundamental
  Electronic Science)}, 86(5):45--53, 2003.

\bibitem{O87}
T.~Ojika.
\newblock Modified deflation algorithm for the solution of singular problems.
  {I}. {A} system of nonlinear algebraic equations.
\newblock {\em J. Math. Anal. Appl.}, 123(1):199--221, 1987.

\bibitem{OWM83}
T.~Ojika, S.~Watanabe, and T.~Mitsui.
\newblock Deflation algorithm for the multiple roots of a system of nonlinear
  equations.
\newblock {\em J. Math. Anal. Appl.}, 96(2):463--479, 1983.

\bibitem{PanWang2004}
V.Y. Pan and X.~Wang.
\newblock On rational number reconstruction and approximation.
\newblock {\em SIAM J. Comput.}, 33(2):502--503, 2004.

\bibitem{PeyrlPar07}
H.~Peyrl and P.A. Parrilo.
\newblock A Macaulay2 package for computing sum of squares decompositions of
  polynomials with rational coefficients.
\newblock In {\em SNC}, pages 207--208, 2007.

\bibitem{PeyrlParr2008}
H.~Peyrl and P.A. Parrilo.
\newblock Computing sum of squares decompositions with rational coefficients.
\newblock {\em Theor. Comput. Sci.}, 409(2):269--281, Dec. 2008.

\bibitem{Rou99}
F.~Rouillier.
\newblock Solving zero-dimensional systems through the rational univariate
  representation.
\newblock {\em Journal of Applicable Algebra in Engineering, Communication and
  Computing}, 9(5):433--461, 1999.

\bibitem{RumpGrail2010}
S.~Rump and S.~Graillat.
\newblock Verified error bounds for multiple roots of systems of nonlinear
  equations.
\newblock {\em Numerical Algorithms}, 54(3):359--377, 2010.

\bibitem{Rump1983}
S.M. Rump.
\newblock Solving algebraic problems with high accuracy.
\newblock In {\em Proc. Of the Symposium on A New Approach to Scientific
  Computation}, pages 51--120, San Diego, CA, USA, 1983. Academic Press
  Professional, Inc.

\bibitem{ElDinZhi2010}
M.~Safey El~Din and L.~Zhi.
\newblock Computing rational points in convex semialgebraic sets and sum of
  squares decompositions.
\newblock {\em SIAM J. on Optimization}, 20(6):2876--2889, 2010.

\bibitem{Schrijver1986}
A.~Schrijver.
\newblock {\em Theory of linear and integer programming}.
\newblock Wiley-Interscience Series in Discrete Mathematics. John Wiley \& Sons
  Ltd., Chichester, 1986.
\newblock A Wiley-Interscience Publication.

\bibitem{Smale86}
S.~Smale.
\newblock Newton's method estimates from data at one point.
\newblock In {\em The merging of disciplines: new directions in pure, applied,
  and computational mathematics ({L}aramie, {W}yo., 1985)}, pages 185--196.
  Springer, New York, 1986.

\bibitem{SW05}
A.J. Sommese and C.W. Wampler, II.
\newblock {\em The numerical solution of systems of polynomials}.
\newblock World Scientific Publishing Co. Pte. Ltd., Hackensack, NJ, 2005.

\bibitem{Steffy2010}
D.E. Steffy.
\newblock Exact solutions to linear systems of equations using output sensitive
  lifting.
\newblock {\em ACM Communications in Computer Algebra}, 44(4):160--182, Dec.
  2010.

\bibitem{Szanto2008}
A.~Szanto.
\newblock Solving over-determined systems by the subresultant method.
\newblock {\em J. Symbolic Comput.}, 43(1):46--74, 2008.
\newblock With an appendix by Marc Chardin.

\bibitem{Trinks85}
W.~Trinks.
\newblock On improving approximate results of Buchberger's algorithm by
  Newton's method.
\newblock In B.~Caviness, editor, {\em EUROCAL '85}, volume 204 of {\em Lecture
  Notes in Computer Science}, pages 608--612. Springer Berlin Heidelberg, 1985.

\bibitem{vzGathenGerhard1999}
J.~von~zur Gathen and J.~Gerhard.
\newblock {\em Modern computer algebra}.
\newblock Cambridge University Press, New York, 1999.

\bibitem{WangPan2003}
X.~Wang and V.Y. Pan.
\newblock Acceleration of {E}uclidean algorithm and rational number
  reconstruction.
\newblock {\em SIAM J. Comput.}, 32(2):548--556, 2003.

\bibitem{Winkler88}
F.~Winkler.
\newblock A {$p$}-adic approach to the computation of {G}r\"obner bases.
\newblock {\em J. Symbolic Comput.}, 6(2-3):287--304, 1988.

\bibitem{YuKaTo1998}
K.~Yamamura, H.~Kawata, and A.~Tokue.
\newblock Interval solution of nonlinear equations using linear programming.
\newblock {\em BIT Numerical Mathematics}, 38(1):186--199, 1998.

\bibitem{Yangetal2013}
Z.~Yang, L.~Zhi, and Y.~Zhu.
\newblock Verified error bounds for real solutions of positive-dimensional
  polynomial systems.
\newblock In {\em Proceedings of the 38th International Symposium on
  International Symposium on Symbolic and Algebraic Computation}, ISSAC '13,
  pages 371--378, New York, NY, USA, 2013. ACM.

\end{thebibliography}
\def\cprime{$'$}

\end{document}